\newtheorem{Lem}{Lemma}
\newtheorem{Cor}{Corollary}
\newtheorem{Def}{Definition}
\newtheorem{Thm}{Theorem}
\begin{document}

\abovecaptionskip5pt \belowcaptionskip5pt

\title{How to group wireless nodes together?\\
\large
A survey on Matchings and Nearest Neighbour Graphs}
\author{Anastasios Giovanidis\\
CNRS \& T\'el\'ecom ParisTech - LTCI\\
email: anastasios.giovanidis@telecom-paristech.fr
}
\maketitle
\IEEEpeerreviewmaketitle

\tableofcontents

\newpage

\begin{abstract}
This report presents a survey on how to group together in a static way planar nodes, that may belong to a wireless network (ad hoc or cellular). The aim is to identify appropriate methods that could also be applied for Point Processes. Specifically matching pairs and algorithms are initially discussed. Next, specifically for Point Processes, the Nearest Neighbour and Lilypond models are presented. Properties and results for the two models are stated. Original bounds are given for the value of the so-called generation number, which is related to the size of the nearest neighbour cluster. Finally, a variation of the nearest neighbour grouping is proposed and an original metric is introduced, named here the ancestor number. This is used to facilitate the analysis of the distribution of cluster size. Based on this certain related bounds are derived. The report and the analysis included show clearly the difficulty of working  in point processes with static clusters of size greater than two, when these are defined by proximity criteria.
\end{abstract}

\section{Introduction}

In this overview report, a certain number of questions is identified that has been raised from our previous work \cite{GiovaWCNC12}, \cite{BaccGiovAsilomar13}, \cite{AGFBTWC14}, \cite{GioAlvDeS15} related to cooperation in wireless (cellular) networks. Specifically we investigate the question of what is the best way to group nodes of a wireless network together, based on their location only. Furthermore, we propose possible steps for future research by identifying tools that exist in the literature. 

Apart from the literature survey, the report contains a certain number of novel results on the Nearest Neighbour Graph and the distribution of the size of groups (from now on called clusters). These are found in Section \ref{subGen} and \ref{SecVI}.

\subsection{Dynamic Clusters}

A key problem raised in the modelling and analysis of cooperative wireless networks in \cite{AGFBTWC14}, \cite{BartekPaulFactorial14}, \cite{PapadogiannisDyn08}, \cite{NigamMiHaenngiDec13}, \cite{NigamJour14}, \cite{TanbourgiCOOP14}, \cite{LeeTWC15}, \cite{Sakr14} has been the fact that these clusters were formed in a dynamic way. This is often not desirable, because it requires high flexibility in the inter-cell communication, as well as a large amount of information exchange. Additionally, when the same station takes part in different clusters, it should share its time or frequency resources among them, thus leading to spectral efficiency degradation. Such assumptions on dynamic cluster formation have further created problems in the analysis of interference. As a result in \cite{AGFBTWC14} the far-field approximation had to be made. Furthermore, in the same work, due to the appearance of secondary users, the same Base Station that serves the user, may also create first order interference, in order to serve the secondary users simultaneously. 

Some of these observation were also identified in \cite{AkoumHeathSPA12}, \cite{AkoumHeathJournal13}, where the authors use a clustering method based on random centres distributed themselves as a point process and in  \cite{ParkColor16} where the authors provide an approach towards solution by use of graph coloring.

\subsection{Static Clusters}

To avoid such problems for good, an idea, which we investigate here, is to model cooperative networks with disjoint and fixed clusters. In such case, a base station will only take part in at most one single cluster. Furthermore, planar areas will be permanently assigned to a specific cluster.

Analysis in this way will be facilitated and the results will give more clear evidence of possible performance benefits by clustering Base Stations and sharing their users. What is more, in such settings more complicated Multiple-Input-Multiple-Output (i.e Multiple BSs Multiple Users) cooperation scenarios can be examined, something which was not easy in the previous works, e.g. in \cite{AGFBTWC14} we analysed $2\times 1$ pairs of BSs that serve exactly one user/location. Other authors \cite{AkoumHeathSPA12}, \cite{AkoumHeathJournal13}, \cite{LeeTWC15}, have made different efforts with approximative models. 

\subsection{Static Clusters based on Node Proximity}

We are interested to find grouping methods based on node proximity; hence we do not consider here channel characteristics. The reason is that we search for clusters of fixed size and fixed elements (nodes), whose choice does not depend on the variable parameters of the telecommunication systems, as for example the fading or the user positions. In this sense, our work here aims at \textit{network-defined and fixed} clusters, to differentiate from a user-driven selection, done in our previous work \cite{BaccGiovAsilomar13}, \cite{AGFBarXiv13} as well as the related research by Thiele \textit{et al.} in \cite{ThieleAsilomar12}.

The criterion of relative distances for defining clusters is reasonable for the following reasons. It is related to the path-loss factor of the channel power. Clusters should in general be designed to have a stable structure for long time-intervals, so that they can provide stable link quality. The quality of a link is best described by the path-loss if we average over the fast-fading, which varies over time. The path-loss is a decreasing function of the distance from the transmitter and it deteriorates due to the path-loss exponent. Taking this into account, small relative distances will allow BSs to exchange messages more reliably (even allow the installation of fixed optical fiber communications between them, since clusters do not change). They will also allow a user to be served more efficiently, by a larger group of BSs, whose distance from the user is small, and hence their transmitted signal will on average be received in a satisfactory power level at the user's receiver. Consequently, if a group of atoms lies close enough to each other and relatively far away from other atoms, their potential cluster serving users in cooperation will be beneficial for the network performance. On the one hand, the BSs of the cluster will transmit in cooperation without causing interference to each other and at the same time they may increase their beneficial signal through network MIMO techniques. On the other hand, these BSs will be far away from other atoms, hence the inter-cluster interference that they cause will be low. Cluster choice based on path-loss has already been studied in a previous work of one of the authors \cite{GiovaWCNC12}. In the latter work however, the user positions were assumed known and influenced the cluster choice, since the criterion was the long-term service satisfaction above a predefined threshold.

\subsection{Open Questions for Static Clustering}

We identify a set of open problems, that can be summarised in the following questions:

\begin{enumerate}
\item \textit{How do we define the static clusters given a realisation of a Point Process? In other words, which atom will cooperate with which neighbour to form clusters? What is the grouping law?}
\item \textit{How large should these clusters be?}
\item \textit{Once the clusters are defined, how do we seperate and assign the planar areas to each cluster?}
\end{enumerate}

After having answered such questions, further performance analysis is possible.

\vspace{+1cm}
This report is organised as follows:

\begin{itemize}
\item In Section \ref{SecI}, some existing results from the literature on the problem of \textit{matching pairs }are presented. These works treat finite sets of atoms and the section includes also an example of how a stable matching (pairing) could actually work. It concludes with a presentation of the problem of matching pairs for a planar Point Process in \ref{subSecII}. The main question is whether the existing results can be extended to the infinite case. After that, the next relevant question, is whether we can redefine the Voronoi tessellation of the plane taking as centres not anymore the individual atoms but rather the pairs, or the linear segments that connect them.

\item Section \ref{SecIII} considers the important topic of advancing beyond the concept of cooperation in pairs and suggests ways of cooperation in pairs, triplets etc... taking into consideration the concepts already discussed in the previous section.

\item Section \ref{SecIV} approaches the problem of static clustering specifically for nodes produced by a realisation of a Point Process. The section investigates ways to group nodes only based on proximity. Two methods are presented from the literature, namely the Nearest Neighbour and the Lilypond model. Common properties of the two models are presented and their differences are explained. 

\item Section \ref{SecV} focuses on the Nearest Neighbour graph and provides main results from the literature, on the structure and size of such clusters. A novel theorem is included, specifically, Theorem 3, where bounds for the size of a branch of the cluster from a Nearest Neighbour graph for Poisson Point Processes are derived. These are based on the so called, \textit{generation number}, already introduced in the literature.

\item Section \ref{SecVI} introduces another metric for the size of a cluster, formed by a variation of the nearest neighbour model. The metric is named here the \textit{ancestor number}. This number does not count only the size of one branch of the cluster, but rather aims at approximating its total size. Based on geometric bounds and for Poisson Point Processes, the ancestor number is bounded above and below in Theorem 4. The section concludes with bounds on the tail probability of the size of a cluster produced by this variation of the nearest neighbour graph.

\end{itemize}

\section{Matching Problems}
\label{SecI}

The literature is very rich especially in "pairing" problems (marriage, roommates) for finite sets. These relevant problems are discussed in the following subsections.

\subsection{The Stable Marriage Problem}
In this problem first presented and investigated by Gale and Shapley (1962) \cite{GaleShapley62}, there are exactly $n$ men and $n$ women who wish to be paired in a stable way. Each person (woman or man) ranks those of the opposite sex in accordance to her/his preferences for a marriage partner. 

Suppose we suggest a configuration of $n$ marriages between these men and women. The configuration is \textit{unstable} if under it there are a man and a woman, not married to each other, but who prefer each other (according to their ranking) to their actual mates. We call this a \textit{blocking pair} for the configuration.

An important conclusion of the paper is that "\textit{there always exists a stable set of marriages}". In other words, we can always find a configuration that is stable, for any preference list of the users. The proof also provides an algorithm to find such a stable configuration, which may not be unique.
The algorithm is based on a sequence of \textit{proposals} and \textit{considerations}, meaning that a girl (say) may consider a boy who proposed in some round, but can reject him in a next round when a better proposal may come. In this way, the algorithm converges to a configuration without blocking pairs.

\subsection{The Stable Roommates Problem}
This problem is an extension of the Stable Marriage Problem, and was discussed briefly as a special case, in the original Gale-Shapley paper. It was thoroughly treated later on, in the work of Irving (1985) \cite{IrvingRoomMate}. The difference with the above, is that there exists a single set of $n$ persons who wish to be paired  with each other, with the aim e.g. to choose double rooms in a dormitory. Each person, will rank the remaining $n-1$ in a way that lower ranking corresponds to higher preference.

The question is again, as in the stable marriage case, whether there always exists a stable configuration of roommates. This problem, which does not differentiate between men and women is of better relation to our work, where we meet a similar problem of pairing atoms in a realization of a point process.

The algorithm suggested by Irving has two phases. The first is a sequence of suggestions and considerations, where each person rejects any poorer proposals than the one that she/he already has at hand. The result of the first phase is a reduction of the preference list of the candidates. The second phase involves the identification from this list of certain cycles of users with specific preference characteristics. These cycles will determine which persons from the reduced preference list can further be removed. The result of the second phase is either a stable matching (pairing) or the conclusion that no such matching exists for the current problem. Note that the negative conclusion can potentially come already after the first phase. It is the case when some person is left without a proposal to consider after termination of the sequence.

The basic difference between this problem and the problem of Stable Marriage is that \textit{the former may not always obtain a stable solution}.

\subsection{Acyclic Networks}

The authors of the reference works  \cite{MathieuMatchP2P}, \cite{GaiP2Peuro} have identified a very important feature of acyclic preference lists. An acyclic list is one where there exists \textit{no preference cycle} between $k\geq 3$ persons $p_1,p_2,\ldots,p_k$, where $p_i$ prefers $p_{i+1}$ to $p_{i-1}$ (modulo $k$).

Specifically, they proved that \textit{an acyclic matching preference instance (i.e. list) always has a unique stable configuration}. As a consequence \textit{any sufficiently long sequence of active initiatives leads to the unique stable configuration}.

An \textit{initiative} here is a probing between two pairs of some possibly suboptimal configuration, which - when \textit{active} - results in a blocking pair (i.e. breaking of the two pairs and connection of one element of the one with some other from the second pair, which is mutually preferable).

The above results easily extend to so-called \textit{b-matchings}, where one person has a quota $b\left(p\right)\geq 1$ on the number of mates which he can be related with.

The importance of the result lies on the fact that, most preference lists exhibit the acyclic property. Specifically, if we use a mark $m\left(i,j\right)$ to denote the preference of $i$ over $j$, we can identify three preference categories, which are always acyclic:

\begin{itemize}
\item \textit{Global preferences}, where $m\left(i,j\right)=m\left(j\right)$ and the mark is fixed for each person (e.g. bandwidth, capacity etc) and is neighbor-independent.
\item \textit{Symmetric preferences}, where $m\left(i,j\right)=m\left(j,i\right)$ and the marks as seen as equal from both sides of the edge of the graph. Such case is found very often in real world network problems, as when the distance, or latency is used as a mark.
\item \textit{Complementary preferences}.
\end{itemize}

The key property of the acyclic networks is that \textit{any non-trivial acyclic preference instance always has at least one loving pair}.

A loving pair $\left\{p,q\right\}$ is a pair between persons, such that person $p$ is ranked first by person $q$ and vice versa. Hence, loving pairs are unbreakable. Finding them, also provides an algorithm of finding the unique stable matching which converges in a finite number of steps. 

Specifically, \textit{for any acyclic preferences instance, starting from any initial configuration C, there exists a sequence of at most $\frac{B}{2}$ initiatives leading to the stable solution, where $B=\sum_{p\in P}b\left(p\right)$.}

The algorithm is based simply in probing (making an initiative between) different pairs at a time, to identify blocking pairs, which will be more stable. Once a loving pair is found, it is removed from the list and the process is repeated for the remaining persons.  

\subsection{Example of Matching pairs in an Acyclic Network}

We consider in this subsection a concrete example of an acyclic network, where we apply the results of \cite{MathieuMatchP2P} and the algorithm which converges to a stable matching in pairs. Specifically, we consider an instance of eight atoms located on the two-dimensional plane as shown in Fig. \ref{points8exemp}.

\begin{figure}[ht]
\centering
 \includegraphics[width=12cm]{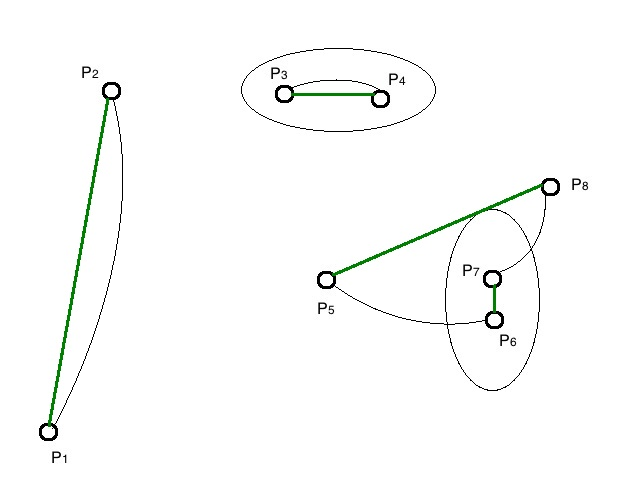}
	\caption{Example of 8 atoms scattered on the two-dimensional plane. The black curved lines are the edges of the initial matching and the green straight lines the final optimal ones. The two pairs within the ellipses are \textit{loving pairs}.}
\label{points8exemp}
\end{figure}
We can derive from the above figure the preference list of the atoms, based on their distance. Observe that when distance is used as mark, the resulting network is symmetric and consequently acyclic. Each atom ranks the rest seven atoms from 1 to 7, with 1 being its closest neighbor and 7 its farthest.

\begin{figure}[h]
\centering
\begin{tabular}{| c | c | c | c | c | c | c | c | c |}
\hline
 & $P_1$ & $P_2$ & $P_3$ & $P_4$ & $P_5$ & $P_6$ & $P_7$ & $P_8$\\\hline
 $P_1$ & X & 4 & 7 & 7 & 7 & 7 & 7 & 7 \\\hline
 $P_2$ & 2 & X & 2 & 5 & 6 & 6 & 6 & 6 \\\hline
 $P_3$ & 3 & 1 & X & 1 & 5 & 5 & 5 & 5 \\\hline
 $P_4$ & 4 & 2 & 1 & X & 3 & 4 & 4 & 3 \\\hline
 $P_5$ & 1 & 3 & 3 & 2 & X & 3 & 3 & 4 \\\hline
 $P_6$ & 5 & 6 & 6 & 6 & 2 & X & 1 & 2 \\\hline
 $P_7$ & 6 & 5 & 5 & 3 & 1 & 1 & X & 1 \\\hline
 $P_8$ & 7 & 7 & 4 & 4 & 4 & 2 & 2 & X \\\hline
\end{tabular}
\label{Pref8exemp}
\caption{Preference list for the 8 atoms of the above example, based on the distance from each other.}
\end{figure}

A first important observation from the preference list is that the pairs $\left\{3,4\right\}$ and $\left\{6,7\right\}$ are loving pairs, i.e. they consider each other first closest neighbours

\begin{eqnarray}
P_3 \stackrel{1}{\leftrightarrow} P_4   & \& &  P_6 \stackrel{1}{\leftrightarrow} P_7. \nonumber
\end{eqnarray}

We initialize a constellation $\mathcal{C}_0$ by pairing at random. The initial pairing is shown on Fig. \ref{points8exemp} with the black curved lines between atoms. The ellipses surrounding the two pairs highlight the loving pairs in our example. We can write down the initial constellation $\mathcal{C}_0$

\begin{eqnarray}
P_1 \leftrightarrow P_2 & & \left(P_1 \stackrel{2}{\rightarrow} P_2, \ \ P_2 \stackrel{4}{\rightarrow} P_1\right)\nonumber\\
P_3 \leftrightarrow P_4 & & \left(P_3 \stackrel{1}{\rightarrow} P_4, \ \ P_4 \stackrel{1}{\rightarrow} P_3\right)\nonumber\\
P_7 \leftrightarrow P_8 & & \left(P_7 \stackrel{2}{\rightarrow} P_8, \ \ P_8 \stackrel{1}{\rightarrow} P_7\right)\nonumber\\
P_5 \leftrightarrow P_6 & & \left(P_5 \stackrel{2}{\rightarrow} P_6, \ \ P_6 \stackrel{3}{\rightarrow} P_5\right)\nonumber.
\end{eqnarray}

The notation $P_i \stackrel{n}{\rightarrow} P_j$ means that the atom $P_j$ is ranked $n$-th by $P_i$.  The steps of the algorithm are as follows:

\begin{enumerate}
\item The pair $\left\{P_3,P_4\right\}$ is a loving pair and hence can be removed from the list. 
\item From the two pairs  $\left\{P_7,P_8\right\}$ and  $\left\{P_5,P_6\right\}$ a blocking pair can be found $\left\{P_6,P_7\right\}$, which is also a loving pair. Hence the above two pairs break, the loving pair  $\left\{P_6,P_7\right\}$ is formed and removed from the preference list. Furthermore, the pair  $\left\{P_5,P_8\right\}$ is formed.
\item The remaining pairs are $\left\{P_1,P_2\right\}$ and $\left\{P_5,P_8\right\}$. 
\item Observing the list of preferences, we see that no blocking pair can be created from the two, hence the two remaining pairs are stable.
\item The final matching is shown in Fig. \ref{points8exemp} with green straight lines.
\end{enumerate}

\subsection{Matchings on infinite sets and Point Processes}
\label{subSecII}
1) \textit{Infinite Sets and Matching - Basic Questions}

The great \textit{challenge} is to find out whether the results of existence of a stable matching for finite sets can be extended to the infinite case, where the atoms of a Point Process form preference lists - possibly having as criterion the distance of each one with all its neighbours. Important questions to be answered here are the following:

\begin{itemize}
\item Does such a stable matching \textit{always exist} for $n\rightarrow \infty$ and for each type of Point Process (Poisson, Mattern, etc.))?
\item Is there a \textit{unique matching}?
\item Is there a \textit{polynomial-time} algorithm which is guaranteed to converge to a possible stable matching? Can we propose a \textit{decentralized} solution?
\end{itemize}

In the results of the previous paragraph, we saw that for finite acyclic networks, such as those which use the distance as mark (i.e. relevant to the Point Processes and the problems of interest), there exists a \textit{unique stable configuration}. Hence, the conjecture that a unique stable matching of pairs for $n\rightarrow \infty$ exists, when the criterion is the distance, seems very logical (still needs to be proved).

Problems arise, only when there exists a preference cycle between $k\geq 3$ atoms $p_1\ldots p_k$, where $p_i$ prefers $p_{i+1}$ to $p_{i-1}$. But this is impossible due to the distance ranking.

The special case, where $k\rightarrow\infty$ should not be excluded. This is the case where each atom $p_i$ is the closest neighbour for $p_{i-1}$, but its own closest neighbour $p_{i+1}\neq p_{i-1}$ and lies outside the ball with center $p_{i-1}$ and having $p_i$ on its boundary (repeated infinitely many times). Such cases of atom placement are relevant with the so called \textit{lilypond model}, first appearing in \cite{HagMeest96} for the Poisson Point Process and further studied by Last and Penrose (2010) \cite{LastLilypond10}. 

Specifically, it is proven in \cite[Lemmas 2.1, 2.2]{LastLilypond10} under certain assumptions, that no such infinite preferential cycle of atoms can exist. 

\vspace{+1cm}

2) \textit{Further Questions related to Geometry}

Relevant further questions -  after the above have been answered - are related to the \textit{distribution of geometric characteristics} of the matchings on the two-dimensional plane.

\begin{itemize}
\item What can we say about the distribution of the distance between two matched partners? 
\item Can we define a new \textit{tessellation of the plane} - like the 1-Voronoi tessellation - which is related to the line segments of matched pairs?  
\item How does the distribution of neighbours around an atom affect the results on matching? (mutual neighbour distribution)
\end{itemize}

\begin{figure}[h]
\centering
 \includegraphics[trim = 5mm 135mm 20mm 10mm, clip, width=12cm]{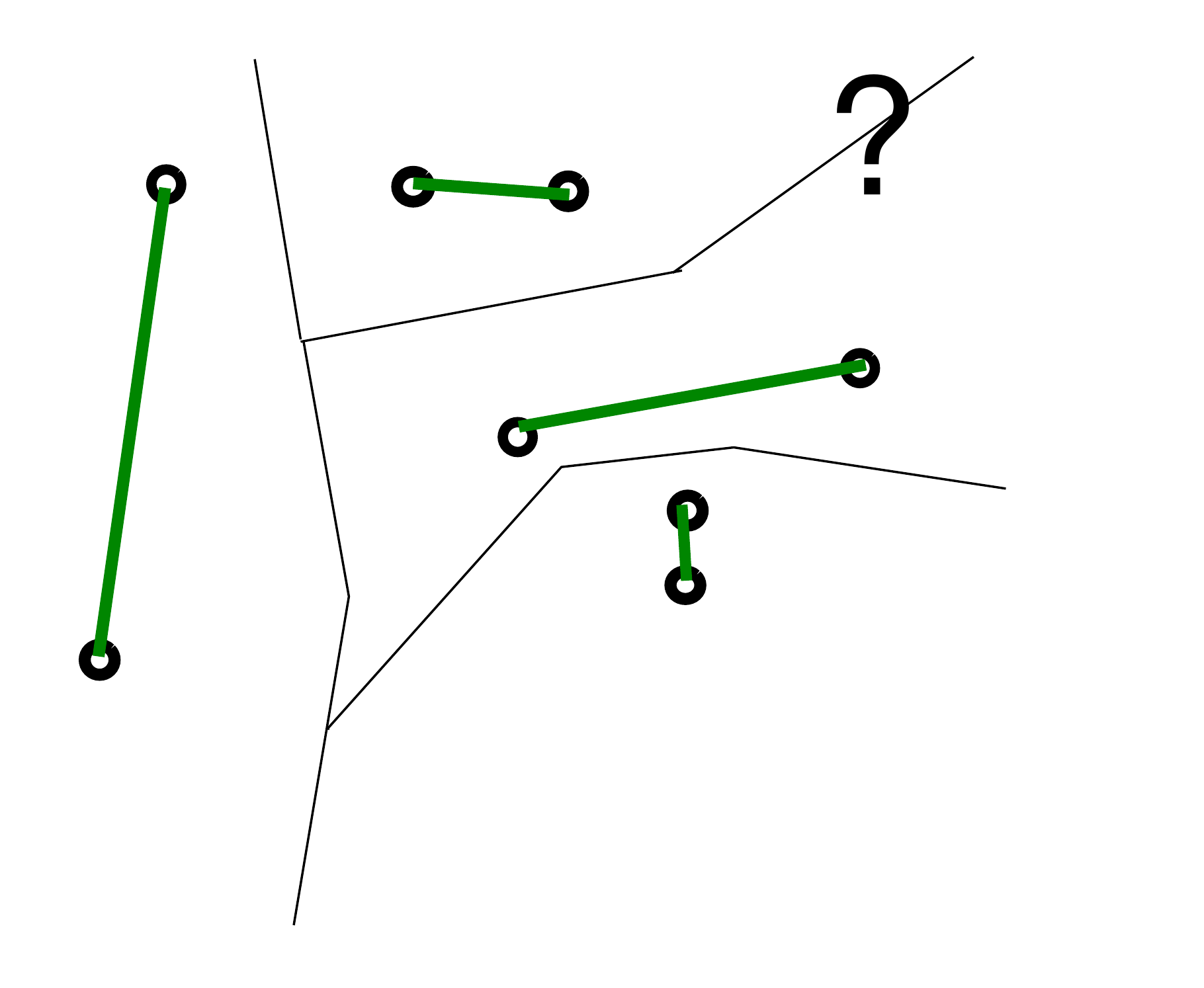}
	\caption{Example of how the tessellation of the 2-dimensional plane could look like, taking as centers the linear intervals defined by the previous matching. We show again the example of the eight atoms similar to the previous section.}
\label{Voronoi8exemp}
\end{figure}

\vspace{+1cm}
3) \textit{Voronoi Tessellation of Segments}

Especially the second question in the previous subsection is of particular interest. After determining the appropriate matchings, we would like to separate the plane into \textit{compact} and maybe \textit{convex} subregions, each one of which should be related to exactly one cooperation cluster. 
An example is shown in Fig. \ref{Voronoi8exemp}.

By connecting the atoms of each cluster, line segments and polygons are formed, and the question on the appropriate tessellation is now stated more generally. 

As shown in \cite{VoronoiMedialAxis02}, the Voronoi cell definition is not restricted to district points, when referring to sites. So, the concept of the Voronoi cell can be generalized to sets of points like e.g. the matched atom pairs, or the linear intervals. Furthermore, an interesting concept relevant to the Voronoi tessellation for segments, is the so-called \textit{medial axis}, and the relation between these two notions.

Another relevant publication, which extends (or rather provides an alternative notion to) the Voronoi tessellation, is the work by Hoffman, Holroyd and Peres (2006), which relates planar points to cells of atoms in a more "fair" way, so that all subregions have equal volume \cite{PeresPoissonLebesgue06}. The volumes are defined by a so-called appetite, modeled by the parameter $\alpha\in\left(0,\infty\right]$ and the points are assigned to atoms (and relevant Voronoi cells) based on the Gale-Shapley stable marriage principle. When $\alpha\rightarrow \infty$ the tessellation is identical to the Voronoi tessellation of the plane.

\section{Clusters beyond just Pairs}
\label{SecIII}

It is reasonable to continue our research beyond the constraint that only pairs of Base Stations are allowed as clusters and all atoms should belong to exactly one such pair. This is a \textit{very restrictive assumption that does not necessarily model the real systems and how cooperative clusters might work within them}.

A different approach would be to allow pairs, triplets, quadruplets, etc. of atoms cooperate, while at the same time certain atoms might not cooperate at all. This approach sounds more reasonable, when atoms form cooperative entities with increasing size depending on their distance and only when this is necessary and helpful. In other words, it might not pay much to form a cooperative pair between two atoms that lie very far away from each other, just for the sake of forming a pair.

A possible way to construct such clusters will be as follows:

\begin{enumerate}
\item Start with the realization of a Point Process. The current constellation is the \textit{non-cooperative case}.

\item Identify pairs of atoms that are \textit{mutual first neighbours} to one another as in \cite{GioAlvDeS15}. 
\begin{eqnarray}
P_{i}\stackrel{1}{\leftrightarrow}P_j.\nonumber
\end{eqnarray}

These neighbours will form the pair-clusters and the new constellation will be called \textit{2-cooperative case}. Of course all the atoms that do not find a pair with mutual first neighbour will be treated as single atoms, so that pairs and single Base Stations coexist.

\item Among the cooperative pairs, identify triplets of atoms, such that the following chain of preference (related to distance) holds 
\begin{eqnarray}
P_{i-1}\stackrel{1}{\rightarrow}P_i\stackrel{1}{\leftrightarrow}P_{i+1} & or & P_{i-1}\stackrel{1}{\leftrightarrow}P_i\stackrel{1}{\leftarrow}P_{i+1}.\nonumber
\end{eqnarray}
These obviously constitute the \textit{3-cooperative case}. In such scenario single atoms, pairs and triplets coexist. In the formation of such triplets, possibly a constraint on the distance should be considered, so that cases where a third single atom which is far away is included in the triplet are avoided. 

(Notice: Remember that the network is acyclic so we do not consider the third case where $P_{i+1}$ may have first preference to $P_{i-1}$ or the other way round. Also, the case of three equidistant atoms is impossible and has zero probability in the Poisson Point Process case.)

\item Among the cooperative triplets identify chains of preference with four atoms, and so on...
\item ...
\end{enumerate}

This type of clustering is based and builds on the previous definition of \textit{loving pairs} in the subsection for acyclic networks. The clusters are formed as cooperative entities including atoms which have only relations of first preference (first closest neighbour) at least in one direction. An example of such a clustering is shown in the following Fig. \ref{BeyondPairexemp}.

\begin{figure}[h]
\centering
 \includegraphics[trim = 5mm 185mm 20mm 0mm, clip, width=12cm]{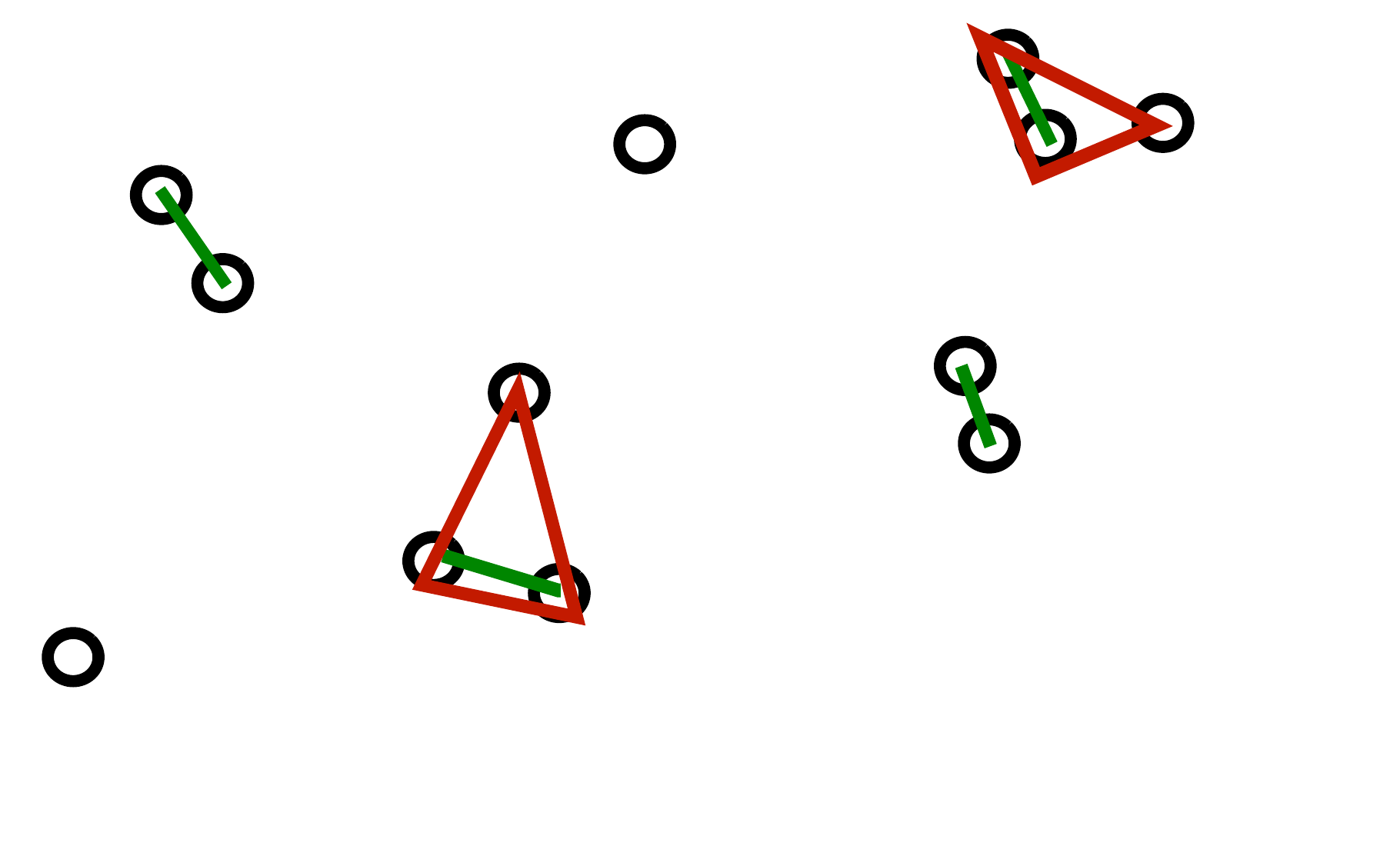}
	\caption{Example of the proposed matching with no-cooperation, 2-cooperation and 3-cooperation case. The pairs are shown with green linear intervals and the triplets with red.}
\label{BeyondPairexemp}
\end{figure}

After such clusters are formed consideration of Voronoi tessellations follow. Finally each case of 2-, 3-,$\ldots$ cooperation is compared to the case of no cooperation in order to derive possible benefits related to performance measures such as coverage and throughput.

Other types of cluster formation are also under discussion.

The direct benefit of such an approach is however that the clusters are isolated entities and each one of these has a defined planar area of interest/association. There is no overlap either of planar Voronoi cells or of atoms that might belong to the same cluster. This helps a lot in the analysis and models a way of function that has been suggested as appropriate for cooperative communications systems.

\section{Clusters for Point Processes}
\label{SecIV}

Let us consider a homogeneous point process (p.p) $\Phi$ in $\mathbb{R}^2$ with non-negative density $\lambda>0$. When referring to a Poisson Point Process we will use the abbreviation p.p.p. One realization of the process is $\phi$ and can be described by the infinite set of atoms $\left\{\mathbf{z}_i\right\}$, where $\mathbf{z}_i=\left(x_i,y_i\right)$. Each realization shows a possible deployment of single antenna Base Stations (BSs) on the plane. We wish to group these BSs (or atoms) into \textit{disjoint cooperative clusters}, with possibly different sizes, where size here means the cardinality of atoms included in each cluster. To be more formal, a cluster is defined to be a finite subset $\mathcal{C}\left(\phi\right)$ of the realization $\phi$ (we omit $\phi$ from now on unless we refer to different realizations of the p.p.), with cardinality $card\left(\mathcal{C}\right)$. Exactly as the atoms of $\phi$ are somehow enumerated and indexed by $i$, we enumerate and index the clusters using the index $m$. We consider clusters of atoms with the following two properties: (a) the set of all clusters constitutes a partition of $\phi$, hence their union exhausts the infinite set of atoms and  (b) they are disjoint subsets of $\phi$, meaning that the intersection between any pair of distinct clusters is empty

\begin{eqnarray}
\label{ClusterProp1}
\bigcup_{m=1}^{\infty} \mathcal{C}_m & = & \phi,\\
\label{ClusterProp2}
\mathcal{C}_m\cap\mathcal{C}_n & = & \emptyset,\ \ \forall m\neq n.
\end{eqnarray}

For the partition of $\phi$ into clusters, we intend to use rules that depend only on the geometry. In other words, given a realization of the p.p., an atom $\mathbf{z}_i$  will take part in a cluster, based only on its relative distance to the rest of the atoms $\phi\setminus\left\{\mathbf{z}_i\right\}$ as well as their own relative position. 

We consider in this report different ways to connect the atoms of a p.p. and define the cooperative clusters, based solely on $\phi$. A connection between two atoms means that these cooperate and belong to the same cluster. These two models have been presented and studied partly in the work of H\"aggstr\"om and Meester \cite{HagMeest96}. Specifically we will use:

\begin{enumerate}
\item the \textbf{nearest neighbour (NN) model}. Given the realization $\phi$ we connect each $\mathbf{z}_i$ to its \textit{nearest neighbour} by an undirected edge. This results in a graph $\mathcal{G}_{NN}$, which is well defined at least for a p.p.p. where no two inter-atom distances are the same a.s. and hence each atom has a unique first neighbour. However, an atom can be a nearest neighbour for a set of atoms (possibly empty).
\item the \textbf{lilypond (LL) model}. We construct the graph $\mathcal{G}_{LL}$ dynamically as follows. Starting by the realization $\phi$, we assume at time $t = 0$ that there is a ball of radius $t$ (hence $0$) centered on each atom $\mathbf{z}_i$. Then we let time $t \uparrow$ evolve and the radii of these balls grow (linearly in $t$ and all with the same speed). As soon as a ball hits another ball, it stops growing forever. Notice that the other ball could be itself in a phase of growth (or not). The time instant that the hitting takes place, say $t_n =t$, is saved and gives the ball's radius $r_i := t_n$. In this way, a sequence of ball-touching times (equivalently of radii) is formed $t_1, t_2,\ldots,t_n,\ldots$ and we are interested in the limiting configuration as $t\rightarrow\infty$. Each time instant a ball touches another one and stops growing, an undirected link is drawn at the graph $\mathcal{G}_{LL}$. The lilypond model falls in the category of the hard sphere Boolean model, where balls of different radii centered at atoms of the p.p.p. do not overlap but are possibly tangential.
\end{enumerate}

In general the two models result in different partitions of $\phi$ and different sets of clusters. There are however certain interesting \textit{properties} (\textbf{P.x}) shared by both models:

\begin{enumerate}[I)]
\item The cluster formation is independent of the density $\lambda$ of the p.p.p..
\item The graph $\mathcal{G}$ (either $\mathcal{G}_{NN}$ or $\mathcal{G}_{LL}$) is disconnected, i.e. there exist two atoms that are not connected by any path. 
\item Each resulting cluster $\mathcal{C}$ does not contain \textit{cycles}, it is a \textit{tree} and hence the graph $\mathcal{G}$ is a \textit{forest}.
\item The graph $\mathcal{G}$ contains a.s. no infinite component, i.e. it does not percolate \cite[Th.2.1 and Th.5.2]{HagMeest96}. Consequently, the cardinality of each cluster $card\left(\mathcal{C}\right)$ is a.s. finite.
\item There exists no isolated node in the graph $\mathcal{G}$, i.e. there exists no cluster of $card\left(\mathcal{C}\right)=1$. For the case $\mathcal{G}_{NN}$ this is because all atoms have a nearest neighbour. For the case $\mathcal{G}_{LL}$ because each ball will eventually touch another one as $t\rightarrow\infty$ (the probability of an empty ball for p.p.p. is $e^{-\lambda\pi t^2}\stackrel{t\uparrow}{\rightarrow}0$).
\end{enumerate}

We first show in Fig.\ref{fig:COOPclustNN}-\subref{fig:COOPclustLL}, the $\mathcal{G}_{NN}$ and $\mathcal{G}_{LL}$ graphs produced when applying the nearest-neighbour and lilypond rules on an example realization of a p.p.p. with density $\lambda=2$ [atoms/$m^2$]. Furthermore, we better explain how the $\mathcal{G}_{LL}$ graph is formed as the balls of radius $t$ increase and touch each other. For this, we illustrate four instants of their growth (including the end instant) in Fig.\ref{fig:LilypondEVE20}-\subref{fig:LilypondEVE83}. 

Although the two graphs share many common properties as presented above, it is obvious from the subfigures Fig.\ref{fig:COOPclustNN}-\subref{fig:COOPclustLL} that they actually are different. Of course they have similarities, however there clearly are cases where different clusters are formed. To give an example, let us focus on the largest cluster at the right handside of the plane denoted here by $\mathcal{C}_{o}$, which in the NN graph has cardinality $card\left(\mathcal{C}^{(NN)}_{o}\right)=7$ and in the graph LL has cardinality $card\left(\mathcal{C}^{(LL)}_{o}\right)=5$. A doublet in the lilypond model appears as an independent cluster of cardinality 2, although  in the NN graph it belongs to the large cluster. This occurs because in the LL model an edge of the graph may connect atoms that are not necessarily nearest neighbours. This is illustrated for clarity in Fig.\ref{NNvsLLexemp}. For the atom $\left(2,0\right)$ its nearest neighbour is actually $\left(2,1.5\right)$ (inside the dashed circles). However, the latter is linked with its own nearest neighbour  $\left(2.6,1.5\right)$, while due to the growth process of the LL model, the balls centered at $\left(2,0\right)$ and $\left(2,1.5\right)$ do not touch each other, due to the presence of the atom at the origin $\left(0,0\right)$.

In the paragraphs below, we will analyze the two models separately. Furthermore, we will provide results related to their size and probability, together with a comparison between them.

\begin{figure}[ht]    
\centering  
\label{fig:COOP1}
	 		\subfigure[Clusters in the NN model.]{          
           \includegraphics[trim = 40mm 75mm 40mm 70mm, clip, width=0.46\textwidth]{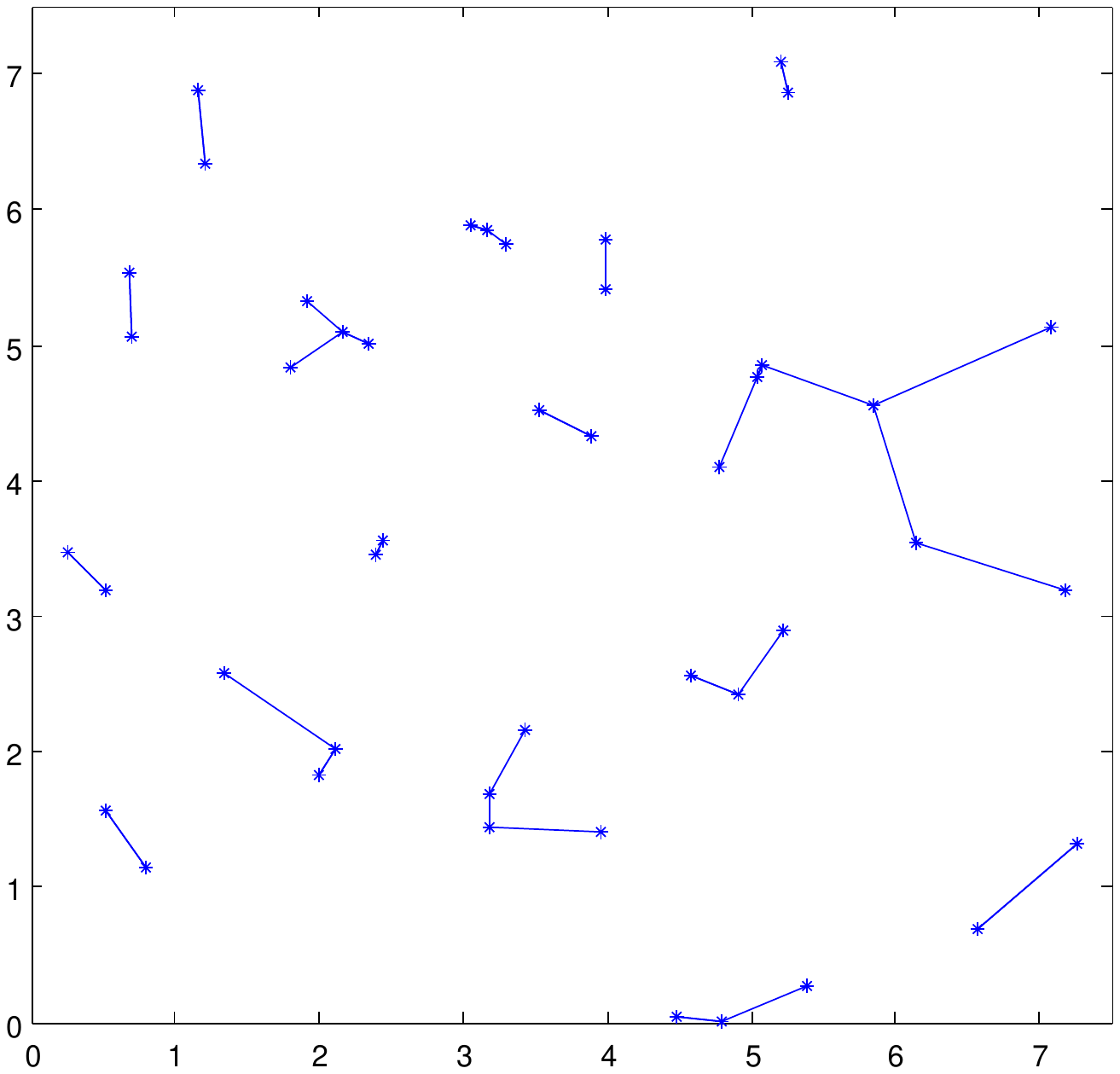}
           \label{fig:COOPclustNN}
           }
            \subfigure[Clusters in the LL model.]{          
           \includegraphics[trim = 40mm 75mm 40mm 70mm, clip, width=0.46\textwidth]{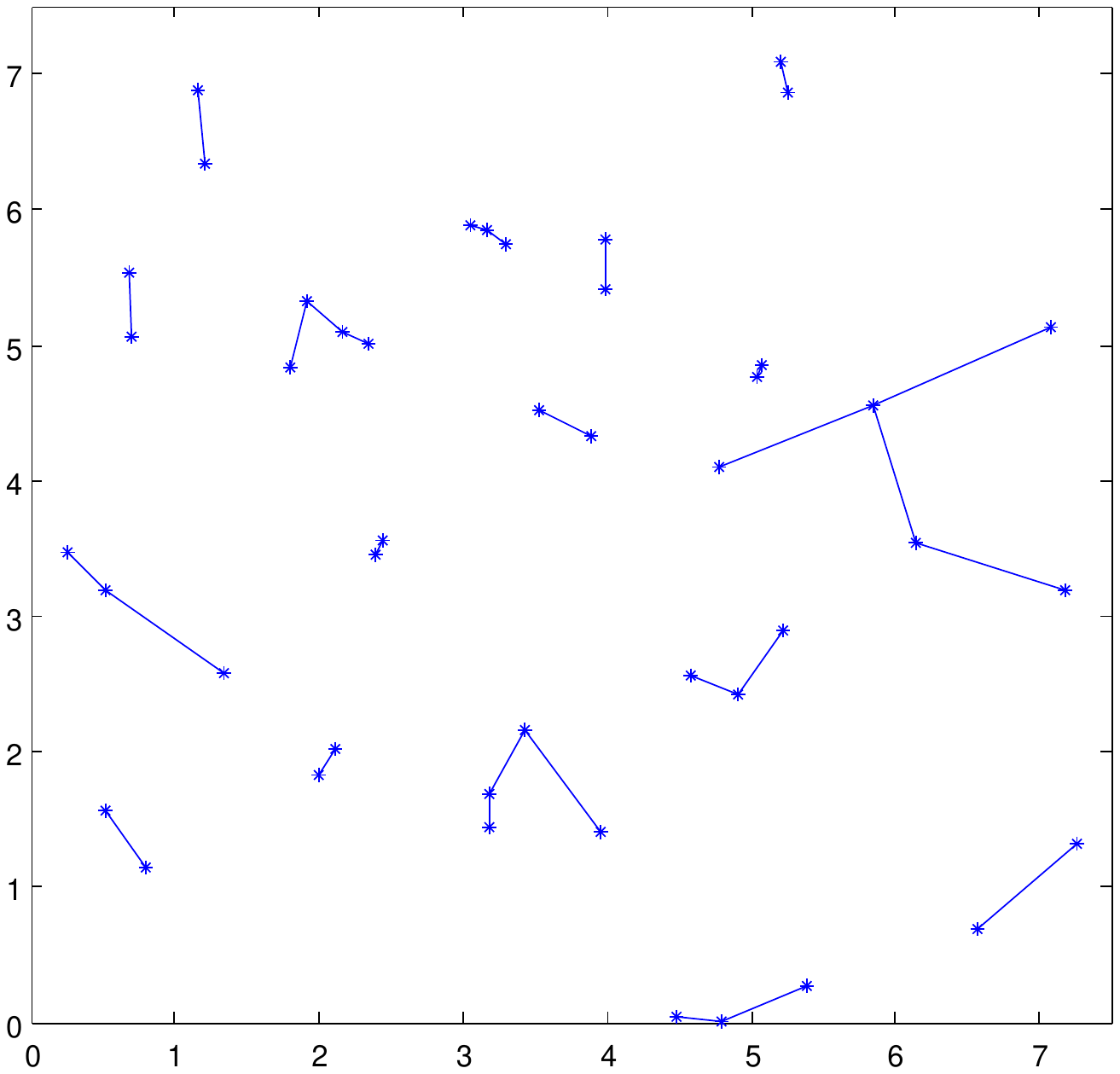}
           \label{fig:COOPclustLL}
           }
           \caption{Example of cooperation clusters for the same example node topology. (a) Nearest-neighbour (NN) (b) Lilypond (LL).}
\end{figure}

\begin{figure}[ht!]    
\centering  
\label{fig:LilypondEVE}
	 		\subfigure[Evolution of the Lilypond $t=0.20$.]{          
           \includegraphics[trim = 30mm 60mm 25mm 60mm, clip, width=0.46\textwidth]{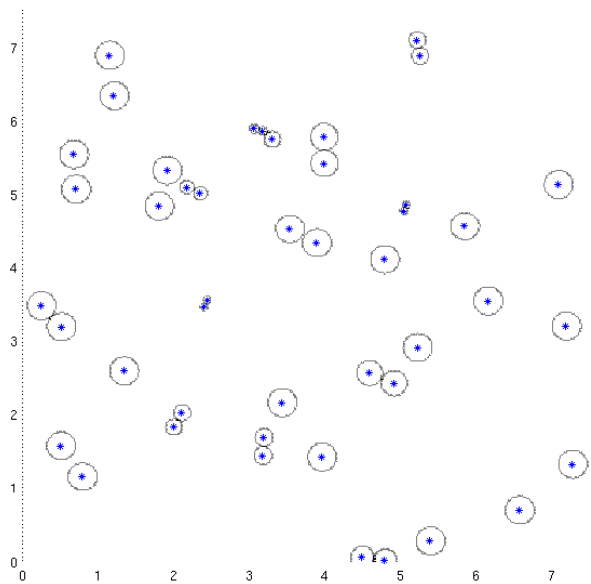}
           \label{fig:LilypondEVE20}
           }
            \subfigure[Evolution of the Lilypond $t=0.40$.]{          
           \includegraphics[trim = 30mm 60mm 25mm 60mm, clip, width=0.46\textwidth]{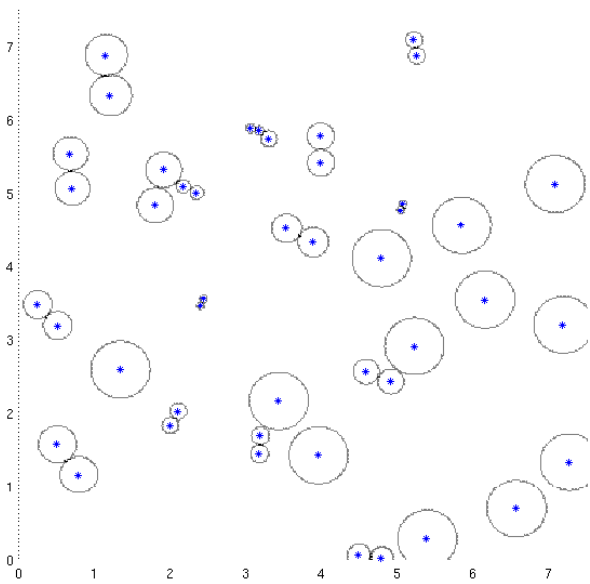}
           \label{fig:LilypondEVE40}
           }
           \subfigure[Evolution of the Lilypond $t=0.60$.]{          
           \includegraphics[trim = 30mm 60mm 25mm 60mm, clip, width=0.46\textwidth]{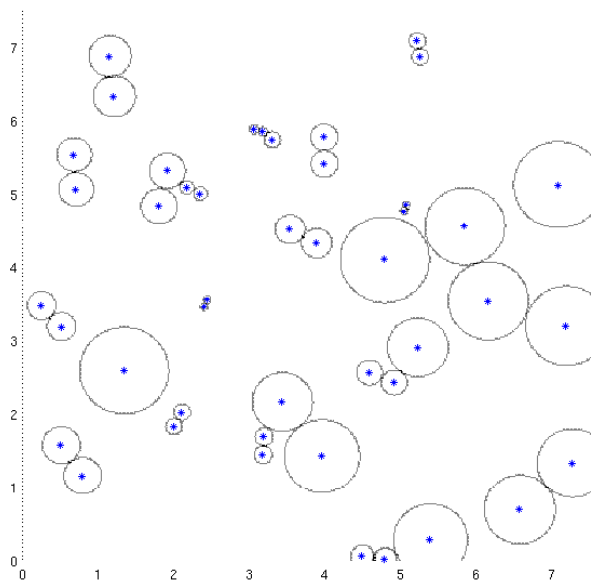}
           \label{fig:LilypondEVE60}
           }
            \subfigure[Evolution of the Lilypond $t=0.83$.]{          
           \includegraphics[trim = 30mm 60mm 25mm 60mm, clip, width=0.46\textwidth]{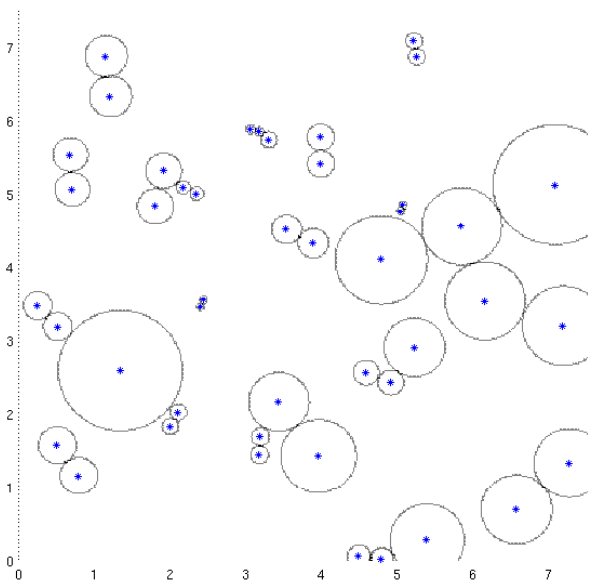}
           \label{fig:LilypondEVE83}
           }
           \caption{Four instants of growth for the Lilypond model using the same example topology as in previous Figure. (a) $T=0.20$, (b) $T=0.40$, (c) $T=0.60$, (d) end of growth $T=0.83$.}
\end{figure}

\begin{figure}[h]
\centering
 \includegraphics[trim = 15mm 65mm 10mm 75mm, clip, width=0.6\textwidth]{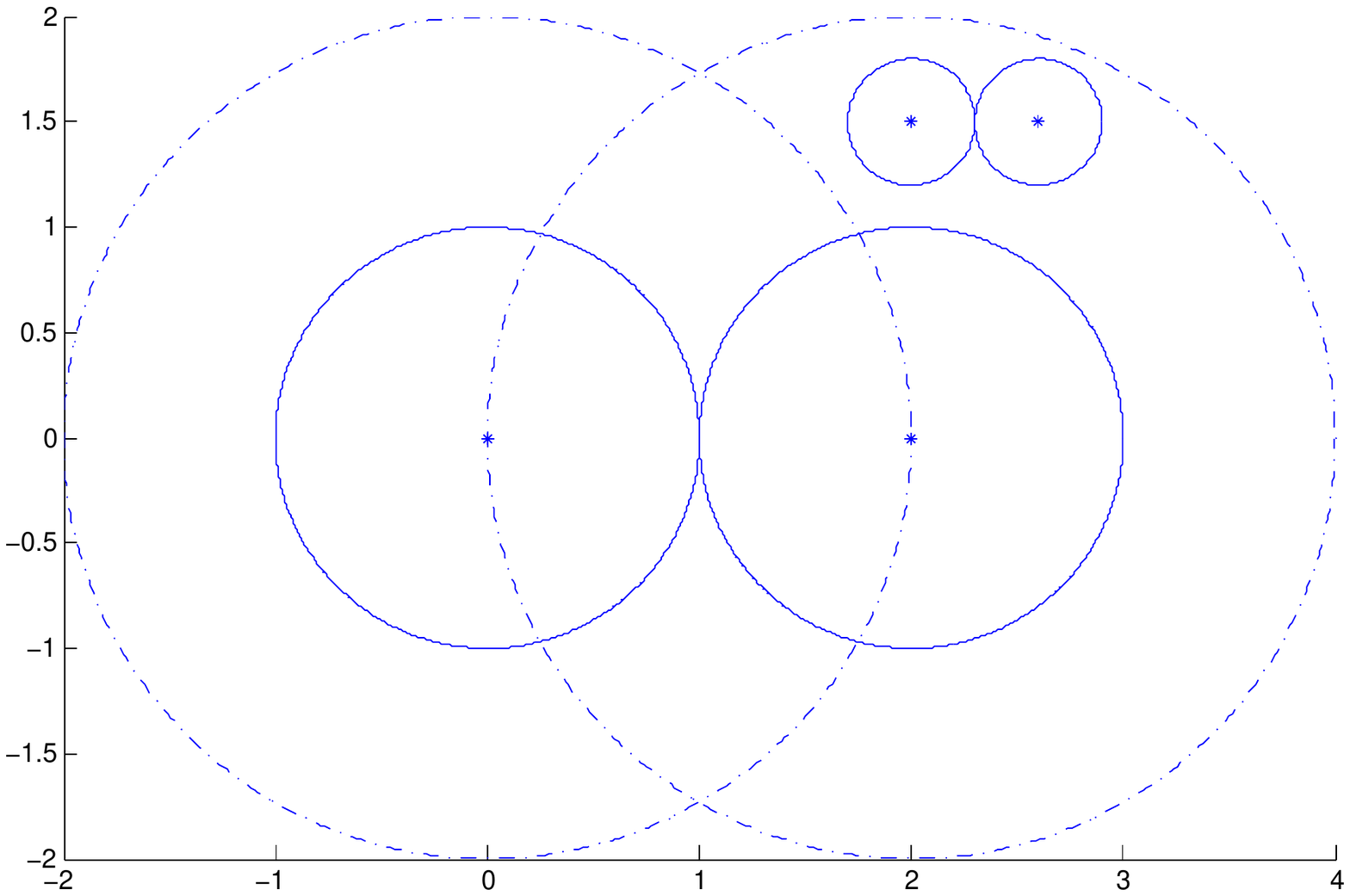}
	\caption{Illustrative explanation of the differences between the NN and the LL graph.}
\label{NNvsLLexemp}
\end{figure}

\section{Clusters in the nearest-neighbour (NN) graph}
\label{SecV}

\subsection{Properties and Results}

As discussed previously, all clusters $\mathcal{C}_m$ in the NN model are a.s. finite. This was presented in property \textbf{P.IV} and holds due to the fact that there exists no infinite component. We randomly choose an atom and set its position as the origin of the planar coordinates. This atom is called the \textit{typical atom} and is denoted by $\mathbf{z}_o$. Let us further denote by $\mathcal{C}_o$ the \textit{typical cluster}, or else the cluster in which the typical atom belongs to. Then, starting from the typical point we form the sequence of atoms $\mathbf{z}_o\stackrel{1}{\rightarrow}\mathbf{z}_1\stackrel{1}{\rightarrow}\ldots\mathbf{z}_{n}\stackrel{1}{\rightarrow}\mathbf{z}_{n+1}\stackrel{1}{\rightarrow}\ldots$, where the $(n+1)$-th atom is the nearest neighbour of the $n$-th one. Hence $\left\{\mathbf{z}_o,\mathbf{z}_1,\ldots\right\}\subseteq\mathcal{C}_o$. From the distances $r_n:=\left|\mathbf{z}_{n}-\mathbf{z}_{n+1}\right|$, we obtain the sequence $\left\{r_o,r_1,\ldots\right\}$. To fascilitate the analysis, we introduce the \textit{directed NN graph} $\vec{\mathcal{G}}_{NN}$ where $\mathbf{z}_i, \mathbf{z}_j$ are connected by an arc if $j$ is the nearest neighbour of $i$. By replacing the arcs by undirected edges we get $\mathcal{G}_{NN}$. The introduced sequence of atoms makes more sense in the directed graph.

In the NN model, a very important subset of atoms of $\phi$ are the \textit{mutually-nearest-neighbours}, which are pairs of atoms with the property that the one is the nearest neighbour of the other and vice-versa. Their importance is better understood by the following two Lemmas.

\begin{Lem}
\label{Lem:NNpair}
All clusters of the NN model contain exactly one pair of mutually-nearest-neighbours.
\end{Lem}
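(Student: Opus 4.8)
The plan is to read the claim through the functional structure of the directed graph $\vec{\mathcal{G}}_{NN}$. Almost surely (for a p.p.p.) every atom has a unique nearest neighbour, so each atom has out-degree exactly one, and a mutually-nearest-neighbour (MNN) pair is precisely a directed $2$-cycle, i.e. two atoms with $\mathbf{z}_i\stackrel{1}{\rightarrow}\mathbf{z}_j$ and $\mathbf{z}_j\stackrel{1}{\rightarrow}\mathbf{z}_i$. A typical cluster $\mathcal{C}_o$ is a (weakly) connected component of this functional graph and is finite by property \textbf{P.IV}. I would therefore restate the lemma as: every finite component of $\vec{\mathcal{G}}_{NN}$ contains exactly one directed cycle, and that cycle has length two.

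First I would establish the monotonicity of the chain distances. Along any directed path $\mathbf{z}_{n-1}\stackrel{1}{\rightarrow}\mathbf{z}_{n}\stackrel{1}{\rightarrow}\mathbf{z}_{n+1}$, the atom $\mathbf{z}_{n+1}$ is the nearest neighbour of $\mathbf{z}_n$ while $\mathbf{z}_{n-1}$ is merely some other atom, so $r_n=\left|\mathbf{z}_n-\mathbf{z}_{n+1}\right|\le\left|\mathbf{z}_n-\mathbf{z}_{n-1}\right|=r_{n-1}$; hence $\{r_n\}$ is non-increasing. Since all inter-atom distances are a.s. distinct, equality $r_n=r_{n-1}$ forces $\mathbf{z}_{n+1}=\mathbf{z}_{n-1}$, i.e. an MNN pair. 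Applying this around a hypothetical directed cycle $\mathbf{z}_1\to\cdots\to\mathbf{z}_k\to\mathbf{z}_1$ gives $r_1\ge r_2\ge\cdots\ge r_k\ge r_1$, so all these distances are equal, which a.s. is possible only when $k=2$. This rules out every directed cycle except $2$-cycles, establishing the correspondence between cycles and MNN pairs.

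For existence, I would start the chain at the typical atom and follow nearest-neighbour arcs; the chain never leaves the cluster because an atom and its nearest neighbour share an edge and hence lie in the same component. As the cluster is finite and the distances strictly decrease until a repeat occurs, the chain must close into a cycle, which by the previous step is a $2$-cycle, so $\mathcal{C}_o$ contains at least one MNN pair. For uniqueness I would count edges against the tree property \textbf{P.III}. A cluster with $n$ atoms carries exactly $n$ directed arcs (one per atom); passing to the undirected graph $\mathcal{G}_{NN}$, each non-reciprocated arc yields one edge while each MNN pair collapses two arcs into a single edge, so with $p$ MNN pairs the cluster has $n-p$ undirected edges. Since the cluster is a tree it has $n-1$ edges, forcing $p=1$, which simultaneously reconfirms existence and gives the exact count.

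The main obstacle, and the only point needing real care, is the rigorous treatment of the almost-sure arguments: excluding cycles of length $k\ge 3$ rests essentially on the a.s. distinctness of inter-atom distances for the p.p.p., and one must verify that the equality $r_n=r_{n-1}$ genuinely forces $\mathbf{z}_{n+1}=\mathbf{z}_{n-1}$ rather than some other distance coincidence of measure zero. Everything else is purely combinatorial once the finiteness and tree properties \textbf{P.III}--\textbf{P.IV} are invoked.
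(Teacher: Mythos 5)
Your proof is correct and its skeleton matches the paper's: follow the nearest-neighbour chain from the typical atom, invoke finiteness (\textbf{P.IV}) to force the chain to close, and rule out closed directed cycles of length $k\geq 3$. Where you differ is in the two key mechanisms. For excluding long cycles, the paper specialises to the $3$-cycle case and runs an explicit empty-ball contradiction (the ball $\mathcal{B}\left(\mathbf{z}_{N-1},r_{N-1}\right)$ would have to contain $\mathbf{z}_{N-2}$); your route --- monotonicity $r_n\leq r_{n-1}$ along any arc, hence all distances around a cycle are equal, which a.s. forces $k=2$ by distinctness of inter-point distances --- is shorter, handles all $k\geq 3$ uniformly, and in fact subsumes the content of the paper's Lemma~\ref{Lem:NNpairD} on the decreasing radius sequence; the paper's version has the advantage of being a purely deterministic geometric argument that only needs each atom to have a unique nearest neighbour. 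For uniqueness, the paper offers only the one-line assertion that no path in $\vec{\mathcal{G}}_{NN}$ connects two distinct mutual pairs, whereas your edge count ($n$ arcs, $p$ reciprocated pairs, hence $n-p$ undirected edges against the $n-1$ of a tree, so $p=1$) is a genuinely different and more rigorous argument; note that you could even drop the reliance on the tree property \textbf{P.III} (which risks a mild circularity, since \textbf{P.III} is itself only asserted) by using connectivity alone, which gives $n-p\geq n-1$ and hence $p\leq 1$, with existence supplying $p\geq 1$. The one point you flag yourself --- that $r_n=r_{n-1}$ forces $\mathbf{z}_{n+1}=\mathbf{z}_{n-1}$ --- is indeed exactly the a.s. distinctness of inter-atom distances for the p.p.p. and is unproblematic.
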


\begin{proof}
Suppose there is no pair of mutually-nearest-neighbours in a cluster, say the typical one $\mathcal{C}_o$. Then, the sequence of atoms $\left\{\mathbf{z}_o,\mathbf{z}_1,\ldots\right\}$ described above will either be infinite (impossible by property \textbf{P.IV}) or will be finite with $N<\infty$. In the second case, the nearest neighbour of  $\mathbf{z}_N$ is some atom previously encountered in the sequence with index $n\leq N-2$ and a cycle will appear in the cluster $\mathcal{C}_o$ of the graph $\vec{\mathcal{G}}_{NN}$ (and $\mathcal{G}_{NN}$), with size greater or equal to 3. This is impossible (see also \textbf{P.III}). 

To see why, choose without loss of generality (w.l.o.g.) $\mathbf{z}_N\stackrel{1}{\rightarrow}\mathbf{z}_{N-2}$. Then the open ball $\mathcal{B}\left(\mathbf{z}_N,r_N\right)$ with center $\mathbf{z}_N$ and radius $r_N:=\left|\mathbf{z}_{N}-\mathbf{z}_{N-2}\right|$ will be empty with $\mathbf{z}_{N-2}$ on its boundary. Around $\mathbf{z}_{N-2}$ there should be an empty open ball of radius $r_{N-2}$ with $\mathbf{z}_{N-1}$ on its boundary, and for this to hold $r_{N-2}<r_{N}$. Given that $\mathcal{B}\left(\mathbf{z}_{N},r_{N}\right)$ is empty, the atom $\mathbf{z}_{N-1}$ should lie outside this ball, hence at a distance $r_{N-1}>r_N$. However, this contradicts the fact that $\mathbf{z}_N$ is the nearest neighbour of $\mathbf{z}_{N-1}$, because the ball $\mathcal{B}\left(\mathbf{z}_{N-1},r_{N-1}\right)$ is not empty but contains $\mathbf{z}_{N-2}$, since $r_{N-1}>r_N$ and $r_{N-2}<r_{N}$.

We have proven existence of at least one mutually-nearest-neighbours pair within each cluster. The uniqueness comes from the fact that there exists no path of atoms in $\vec{\mathcal{G}}_{NN}$, that connects two distinct such pairs. Atoms of such pairs can only be nearest neighbours for some other atom.
\end{proof}
As a conclusion of the above Lemma, starting from a typical point and moving through the directed path of the graph $\vec{\mathcal{G}}_{NN}$, we get a sequence of $N+1$ points $\left\{\mathbf{z}_{o}, \mathbf{z}_{1},\ldots,\mathbf{z}_{N}\right\}$, with $N\geq 2$, from which the two last ones $\mathbf{z}_{N-1}, \mathbf{z}_{N}$ are mutually-nearest-neighbours. This is the only pair of atoms of the cluster $\mathcal{C}_o$, having such property.

\begin{Lem}
\label{Lem:NNpairD}
The distances $\left\{r_o,r_1,\ldots,r_N\right\}$ - starting from the typical atom $\mathbf{z}_o$ - form a strictly decreasing finite sequence until the $\left(N-1\right)$-th atom. The sequence ends with $r_{N-1}=r_N$. Hence, the minumum distance is the one between the mutually-nearest-neighbours. The atom $\mathbf{z}_N$ is called the "root" of the branch of the cluster, where $\mathbf{z}_o$ belongs to.
\end{Lem}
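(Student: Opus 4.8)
The plan is to obtain the strict decrease $r_o > r_1 > \cdots > r_{N-1}$ directly from the defining minimality of nearest neighbours, and then to read off the terminal equality $r_{N-1} = r_N$ from the mutually-nearest-neighbour structure already guaranteed by Lemma \ref{Lem:NNpair}.

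First I would fix an index $n$ with $1 \le n \le N-1$ and look at the three consecutive atoms $\mathbf{z}_{n-1} \stackrel{1}{\rightarrow} \mathbf{z}_n \stackrel{1}{\rightarrow} \mathbf{z}_{n+1}$ on the directed path. By definition $r_{n-1} = |\mathbf{z}_{n-1} - \mathbf{z}_n|$ and $r_n = |\mathbf{z}_n - \mathbf{z}_{n+1}|$, where $\mathbf{z}_{n+1}$ is the nearest neighbour of $\mathbf{z}_n$. Since $\mathbf{z}_{n-1}$ is itself an atom of $\phi \setminus \{\mathbf{z}_n\}$ lying at distance $r_{n-1}$ from $\mathbf{z}_n$, and $\mathbf{z}_{n+1}$ realises the minimal such distance, I immediately get $r_n \le r_{n-1}$. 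Equivalently, in the empty-ball language already used in Lemma \ref{Lem:NNpair}, the open ball $\mathcal{B}(\mathbf{z}_n, r_n)$ contains no atoms and carries $\mathbf{z}_{n+1}$ on its boundary, so $\mathbf{z}_{n-1}$ cannot lie strictly inside it.

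Next I would isolate the equality case. A tie $r_n = r_{n-1}$ forces two atoms, $\mathbf{z}_{n-1}$ and $\mathbf{z}_{n+1}$, to sit at the same distance from $\mathbf{z}_n$; for a p.p.p. this is a null event, so a.s. the equality can only occur when $\mathbf{z}_{n+1} = \mathbf{z}_{n-1}$, that is, when $\mathbf{z}_{n-1}$ and $\mathbf{z}_n$ are mutually-nearest-neighbours. By Lemma \ref{Lem:NNpair} each cluster contains exactly one such pair, and by the conclusion drawn immediately after it, that pair is precisely $\{\mathbf{z}_{N-1}, \mathbf{z}_N\}$ at the end of the directed path. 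Hence for every $n \le N-1$ the pair $\{\mathbf{z}_{n-1}, \mathbf{z}_n\}$ is not mutual, the equality case is excluded, and the strict chain $r_o > r_1 > \cdots > r_{N-1}$ follows.

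The terminal identity is then immediate: since $\mathbf{z}_{N-1}$ and $\mathbf{z}_N$ are mutually-nearest-neighbours, the nearest neighbour of $\mathbf{z}_N$ is $\mathbf{z}_{N-1}$, so $\mathbf{z}_{N+1} = \mathbf{z}_{N-1}$ and $r_N = |\mathbf{z}_N - \mathbf{z}_{N-1}| = r_{N-1}$. Thus $r_{N-1} = r_N$ is the minimum of the sequence, attained exactly on the mutual pair, which identifies $\mathbf{z}_N$ as the root of the branch. The monotonicity inequality itself is essentially free from the definition of a nearest neighbour; the only point requiring genuine care is the equality case, where I must combine the a.s. distinctness of inter-atom distances for a p.p.p. with the uniqueness and location statement of Lemma \ref{Lem:NNpair} to confine the tie to the terminal pair and nowhere earlier.
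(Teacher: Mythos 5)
Your proposal is correct and follows essentially the same route as the paper: both iterate along the directed chain, use the minimality defining the nearest neighbour (equivalently, the emptiness of each ball $\mathcal{B}(\mathbf{z}_n,r_n)$) to get $r_n\leq r_{n-1}$, and invoke the unique terminal mutually-nearest-neighbour pair from Lemma \ref{Lem:NNpair} to place the single tie at $r_{N-1}=r_N$. Your treatment of the equality case (ruling out two distinct atoms equidistant from $\mathbf{z}_n$ as an a.s.\ null event for the p.p.p.) is in fact more explicit than the paper's, which simply asserts $r_1<r_o$ inside the empty-ball iteration.
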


\begin{proof}
Starting from $\mathbf{z}_o$, the ball $\mathcal{B}\left(\mathbf{z}_o,r_o\right)$ is empty, with $\mathbf{z}_1$ on its boundary. Since these are not mutually-nearest-neighbours, the ball $\mathcal{B}\left(\mathbf{z}_1,r_1\right)$ is empty if $r_1<r_o$ and has $\mathbf{z}_2$ on its boundary, which should additionally lie outside the ball $\mathcal{B}\left(\mathbf{z}_o,r_o\right)$. We iterate this process until the ball of point $\mathbf{z}_{N-1}$ with $\mathbf{z}_{N}$ on its boundary. The sequence stops when $r_N=r_{N-1}$ and the nearest neighbour of $\mathbf{z}_{N}\stackrel{1}{\rightarrow}\mathbf{z}_{N-1}$. An example of such a sequence of decreasing empty balls is shown in Fig.\ref{NNexemp5}.
\end{proof}

\begin{figure}[h]
\centering
 \includegraphics[trim = 15mm 85mm 10mm 85mm, clip, width=0.8\textwidth]{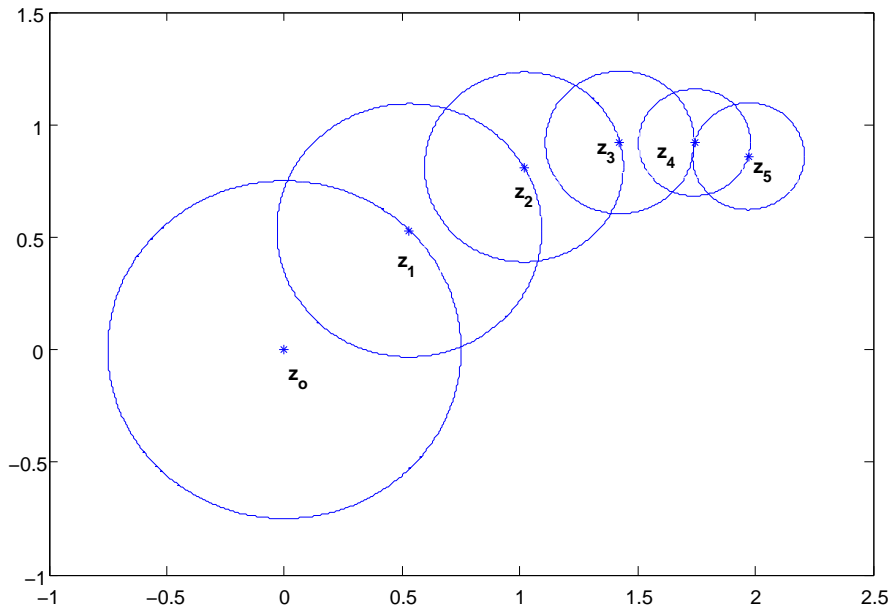}
	\caption{Illustrative presentation of a sequence of atoms $\left\{\mathbf{z}_o,\ldots,\mathbf{z}_5\right\}$ in the $\mathcal{G}_{NN}$ graph that have first neighbour relation. The figure shows the empty balls centered on each atom and the decreasing sequence of radii until the mutually-nearest-neighbour pair $\mathbf{z}_{N-1}\stackrel{1}{\leftrightarrow}\mathbf{z}_N$.}
\label{NNexemp5}
\end{figure}

The event that the \textit{size of a cluster} is very large may have small probability, but not zero. Note here that $card\left(\mathcal{C}_o\right)\geq N+1\geq 2$ in the NN graph. Regarding the asymptotic behaviour of the cluster size, Kozakova \textit{et al.} in \cite{Kozakova06} have derived bounds on its tail probability. In the case of the 2D plane of interest, we present the following result from their work.

\begin{Thm}[Theorem 1.2, \cite{Kozakova06}] Define $\rho\left(N\right)$ as the probability that there is a directed path in $\vec{\mathcal{G}}_{NN}$ through the typical atom at the origin, touching more than $N$ distinct atoms. Then, there exist constants $C_1,C_2,N_o\in\left(0,\infty\right)$ such that
\begin{eqnarray}
\label{Assympt1}
e^{-C_1N\log N}\leq \rho\left(N\right) \leq e^{-C_2N\log N}, & & \forall N\geq N_o.
\end{eqnarray}
\end{Thm}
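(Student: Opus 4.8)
\section*{Proof proposal}

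The plan is to bound $\rho\left(N\right)$ from both sides by $e^{-\Theta\left(N\log N\right)}$, using a first--moment (Palm--Mecke) computation for the upper bound and an explicit template configuration for the lower bound. The common engine is the structure established in Lemmas \ref{Lem:NNpair}--\ref{Lem:NNpairD}: along any directed path the radii $r_n=\left|\mathbf{z}_n-\mathbf{z}_{n+1}\right|$ are strictly monotone, and each ball $\mathcal{B}\left(\mathbf{z}_n,r_n\right)$ is empty of all atoms. First I would reduce the two--sided path through the origin to one--sided chains: a directed path through $\mathbf{z}_o$ touching more than $N$ atoms consists of a forward chain (following nearest neighbours out of $\mathbf{z}_o$ toward the root) glued to a backward chain (atoms whose iterated nearest neighbour reaches $\mathbf{z}_o$), so if it touches more than $N$ atoms one of the two pieces has length at least $N/2$. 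Each piece is a monotone chain of empty balls, so it suffices to bound the probability $q\left(M\right)$ that such a chain of length $M$ emanates from the typical atom, and conclude $\rho\left(N\right)\le 2\,q\left(\lceil N/2\rceil\right)$.

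For the upper bound I would estimate $q\left(M\right)$ by the first moment of the number of such chains. Under the Palm distribution at $\mathbf{z}_o$, the multivariate Mecke formula gives
\begin{equation}
q\left(M\right)\ \le\ \lambda^{M}\int\cdots\int \mathbf{1}\!\left[\,r_0>\cdots>r_{M-1}\,\right]\,e^{-\lambda\left|U\right|}\,d\mathbf{z}_1\cdots d\mathbf{z}_M,
\end{equation}
where $U=\bigcup_{n}\mathcal{B}\left(\mathbf{z}_n,r_n\right)$ is the union of the (necessarily empty) balls, and the factor $e^{-\lambda\left|U\right|}$ is the void probability of this union. Passing to polar coordinates about each predecessor, $d\mathbf{z}_{n+1}=r_n\,dr_n\,d\theta_n$, and bounding each angular range by $2\pi$, reduces the estimate to a radial integral over the simplex $r_0>\cdots>r_{M-1}$. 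The crucial input is the geometric claim that the empty balls cannot overlap too wastefully, namely $\left|U\right|\ge c\sum_{n}r_n^2$ for a universal $c>0$; granting this, the integrand becomes symmetric in the radii, the ordering constraint contributes a factor $1/M!$, and each coordinate integral $\int_0^{\infty}r\,e^{-c\lambda r^2}\,dr$ converges. This yields $q\left(M\right)\le C^{M}/M!$ and hence, by Stirling, $\rho\left(N\right)\le e^{-C_2 N\log N}$.

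For the lower bound I would build a deterministic template that forces a long chain and estimate its probability directly. The calibration is dictated by the upper--bound computation: take all radii comparable to $M^{-1/2}$ but slowly decreasing (e.g.\ $r_n\asymp M^{-1/2}\left(1-n/\left(2M\right)\right)$), so that the total void cost $\lambda\sum_n r_n^2=O\left(1\right)$ stays bounded while the chain is genuinely of length $M$. I would require the Poisson process to place exactly one atom in each of $M$ disjoint target cells of area $\asymp 1/M$ positioned along the template and to leave the surrounding balls empty; the cells are small enough to force each nearest--neighbour relation. Independence of Poisson counts over disjoint regions then gives a product $\prod_{n}\left(\lambda\cdot\Theta\!\left(1/M\right)\right)\cdot e^{-O\left(1\right)}\asymp \left(c/M\right)^{M}$, so $\rho\left(N\right)\ge q\left(N\right)\ge e^{-C_1 N\log N}$.

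The main obstacle is the geometric overlap estimate $\left|U\right|\ge c\sum_n r_n^2$, which drives the decisive $1/M!$ and thus the $\log N$ in the exponent. Naively the union volume is only $\Theta\!\left(r_0^2\right)$, since consecutive balls overlap heavily, so the bound is false without using the chain structure. The point to exploit is that $\mathbf{z}_{n+1}$ lies \emph{outside} every earlier empty ball and on the boundary of $\mathcal{B}\left(\mathbf{z}_n,r_n\right)$ with $r_n$ decreasing; this forces a fixed fraction of each new ball to protrude into previously uncovered territory. Making this quantitative requires ruling out that the chain folds back on itself, which in turn rests on the bounded angular density of nearest--neighbour directions in the plane (only a bounded number of atoms can share a common nearest neighbour). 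Establishing this protrusion lemma uniformly in $n$ is where I expect the real work to lie; the remaining steps are the routine Mecke bookkeeping and the Gaussian radial integrals.
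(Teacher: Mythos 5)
First, note that the paper itself offers no proof of this statement: it is imported verbatim as Theorem 1.2 of \cite{Kozakova06}, so there is no in-paper argument to compare yours against. Judged on its own terms, your sketch reproduces the standard architecture of the cited proof: split the directed path through $\mathbf{z}_o$ into the (unique) forward chain and a backward chain, bound the probability of a long monotone chain of empty balls by a first-moment/Mecke computation whose ordering constraint produces the decisive $1/M!$, and match this from below with an explicit template of $M$ small target cells. Both halves are sound in outline, and the $e^{-\Theta\left(N\log N\right)}$ asymptotics drop out exactly as you describe.

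The one genuine gap is the one you flag yourself: the volume estimate $\left|U\right|\geq c\sum_n r_n^2$ for the union of the decreasing empty balls. Without it the exponent degrades to $e^{-O\left(N\right)}$ and the theorem is not obtained, so this is not a technicality but the heart of the argument. In \cite{Kozakova06} it is closed by a bounded-multiplicity (packing) argument rather than a per-ball protrusion estimate: since each $\mathbf{z}_m$ lies outside every earlier empty ball and the radii decrease, any point of the plane can be covered by at most a constant number $\kappa$ of the balls $\mathcal{B}\left(\mathbf{z}_n,r_n\right)$ (a kissing-number-type bound in dimension $2$), whence $\left|U\right|\geq \kappa^{-1}\pi\sum_n r_n^2$; you would need to supply that lemma to complete the upper bound. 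Two smaller calibration points: in the lower bound, forcing the strict ordering $r_0>r_1>\cdots$ with all radii of order $M^{-1/2}$ requires target cells of diameter $O\left(M^{-3/2}\right)$, hence area $O\left(M^{-3}\right)$ rather than $\Theta\left(1/M\right)$ --- harmless, since it only changes $C_1$; and in the Mecke step the backward portion of the path can branch, so your $q\left(M\right)$ must be read as an expected number of chains combined with Markov's inequality, not directly as a probability.
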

Note that the elements of such path through $\mathbf{z}_o$ are a subset of $\mathcal{C}_o$. Their cardinality can be understood as the depth of the tree $\mathcal{C}_o$, since (as mentioned in \textbf{P.III}) $\mathcal{G}_{NN}$ is a forest. Based on the above theorem, the probability that the length of such path inside the typical cluster $\mathcal{C}_o$ of $\vec{\mathcal{G}}_{NN}$ exceeds $N$, decays superexponentially with the number of atoms $N$ in the path. Taking also into account the fact that the number of atoms with the same nearest neighbour is small, we conclude that the probability of very large clusters is trivial.

\subsection{The generation number}
\label{subGen}

Another argument \cite{Kozakova06} in favor of the small size of clusters in the NN model, comes from the use of the so called \textbf{generation number}. 
As already mentioned, any cluster of $\mathcal{G}_{NN}$ contains exactly one pair of mutually-nearest-neighbours. These atoms are given generation number 1. A Poisson atom $\mathbf{z}_n$ receives generation number $k\geq 2$, if the graph distance (number of hops) to the unique pair in its cluster is equal to $k-1$. The generation number is connected to the length $N$ of the chain of atoms $\left\{\mathbf{z}_o,\mathbf{z}_1,\ldots,\mathbf{z}_N\right\}$ with the relation $k=N$. To see this, if $\mathbf{z}_{N-1},\mathbf{z}_N$ are mutually-nearest-neighbours and take the generation number $1$, then $\mathbf{z}_{N-2}$ will take generation number 2, so that $\mathbf{z}_{N-k}$ will take generation number $k$. Obviously for $\mathbf{z}_{o}$, $N-k=0\Rightarrow N=k$.

Let us denote by $g\left(k\right)$ the probability that the typical atom has generation number $k$. We can calculate exactly the probability of generation number $1$, which corresponds to the probability that a random atom of the p.p.p. belongs to a mutually-nearest-neighbour pair. 

\begin{Thm}
\label{ThMutualNN}
The exact probability that the typical atom of a p.p.p. $\Phi$ with density $\lambda>0$ has a mutually-nearest neighbour (or that the typical atom has generation number $1$) is equal to 
\begin{eqnarray}
\label{06215}
\mathbb{P}\left[\mathbf{z}_o\ in\ pair\right]=g\left(1\right) & = & 0.6215.
\end{eqnarray}
\end{Thm}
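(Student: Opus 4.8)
The plan is to compute $g(1)$ directly by Palm calculus. By Slivnyak's theorem I place the typical atom at the origin, $\mathbf{z}_o=o$, after which the remaining atoms still form a p.p.p.\ of intensity $\lambda$. The typical atom has generation number $1$ exactly when its nearest neighbour $\mathbf{z}_1$ is \emph{mutual}, i.e.\ when there is a point $\mathbf{x}\in\Phi$ at distance $r:=|\mathbf{x}|$ such that both the open ball $\mathcal{B}\left(o,r\right)$ and the open ball $\mathcal{B}\left(\mathbf{x},r\right)$ are empty of the other atoms: emptiness of $\mathcal{B}\left(o,r\right)$ makes $\mathbf{x}$ the nearest neighbour of $o$, and emptiness of $\mathcal{B}\left(\mathbf{x},r\right)$ makes $o$ the nearest neighbour of $\mathbf{x}$ (note $o$ and $\mathbf{x}$ lie on each other's boundary, so they do not fall inside either open ball). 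Since the nearest neighbour of the origin is a.s.\ unique, at most one such $\mathbf{x}$ exists, and the probability equals the expected number of mutual partners. By Mecke's formula this expectation is
\begin{eqnarray}
g\left(1\right) & = & \int_{\mathbb{R}^2} \lambda\, \mathbb{P}\left[\,\mathcal{B}\left(o,r\right)\cup\mathcal{B}\left(\mathbf{x},r\right)\ \text{empty of}\ \Phi\,\right]\, d\mathbf{x} \;=\; \int_{\mathbb{R}^2} \lambda\, e^{-\lambda\, \left|\mathcal{B}\left(o,r\right)\cup\mathcal{B}\left(\mathbf{x},r\right)\right|}\, d\mathbf{x},\nonumber
\end{eqnarray}
using the void probability $e^{-\lambda A}$ of a region of area $A$.

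The crucial geometric step is the area of the void region. Both balls have the same radius $r$ and their centres are at distance exactly $r$ apart, so they overlap in a symmetric lens. The standard formula for the intersection of two discs of radius $r$ whose centres are a distance $d$ apart, evaluated at $d=r$ (so that $\arccos\left(1/2\right)=\pi/3$), gives a lens of area $\tfrac{2\pi}{3}r^2-\tfrac{\sqrt{3}}{2}r^2$. Subtracting this from the sum $2\pi r^2$ of the two disc areas yields
\begin{eqnarray}
\left|\mathcal{B}\left(o,r\right)\cup\mathcal{B}\left(\mathbf{x},r\right)\right| & = & \left(\frac{4\pi}{3}+\frac{\sqrt{3}}{2}\right) r^2 \;=:\; c\, r^2.\nonumber
\end{eqnarray}
The only subtlety here is to integrate the void probability of the \emph{union} of the two balls rather than treating the two emptiness conditions independently; double counting the overlapping lens would be the natural mistake, and this is the main obstacle in the argument (the rest is routine).

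Substituting and passing to polar coordinates, with $d\mathbf{x}=r\,dr\,d\theta$ and the integrand depending only on $r$, gives
\begin{eqnarray}
g\left(1\right) & = & 2\pi\lambda \int_0^\infty e^{-\lambda c\, r^2}\, r\, dr \;=\; 2\pi\lambda\cdot\frac{1}{2\lambda c} \;=\; \frac{\pi}{c} \;=\; \frac{\pi}{\frac{4\pi}{3}+\frac{\sqrt{3}}{2}}.\nonumber
\end{eqnarray}
The density $\lambda$ cancels, which is consistent with property \textbf{P.I}, and evaluating numerically gives $\pi/c\approx 0.6215$, as claimed. I would close by remarking that the density independence is not accidental: it follows at once from the scale invariance of the p.p.p.\ mutual-nearest-neighbour event, so one could equivalently fix $\lambda=1$ from the outset.
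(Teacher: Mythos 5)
Your computation is correct and is essentially the paper's own argument in a slightly different dress: after the polar substitution $V_o=\lambda\pi r^2$ your integral $2\pi\lambda\int_0^\infty e^{-\lambda c r^2}r\,dr$ becomes exactly the paper's $\int_0^\infty e^{-V_o}e^{-\left(1-\alpha\right)V_o}dV_o$ with $\alpha=\frac{2}{3}-\frac{\sqrt{3}}{2\pi}$, since $c=\pi\left(2-\alpha\right)$; the key geometric ingredient (the void probability of the union of two radius-$r$ balls at centre distance $r$, i.e.\ not double-counting the lens) is the same in both. The only cosmetic difference is that you set it up via Mecke's formula over the partner's location while the paper conditions on the nearest-neighbour distance, which yield the identical one-dimensional integral and the value $\frac{1}{2-\alpha}=0.6215$.
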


\begin{proof}
The result has been first encountered in \cite{DalStoSto99} and we add its proof here for completeness. Given that the volume of the ball $\mathcal{B}\left(\mathbf{z}_o,r\right)$ is $V_o$ ($=\lambda\pi r^2$), we can write the exact above probability as follows
\begin{eqnarray}
\label{VoINT}
g\left(1\right) & = & \int_{0}^{\infty}e^{-V_o}e^{-\left(1-\alpha\right)V_o}dV_o,
\end{eqnarray}
where $\left(1-\alpha\right)V_o$ is the area $\mathcal{B}\left(\mathbf{z}_1,r\right)\setminus\mathcal{B}\left(\mathbf{z}_o,r\right)$ outside the ball of $\mathbf{z}_o$. (Notice that this area is upper and lower bouned by $V_o$ and $V_o/2$ respectively to give the bounds in the Theorem \ref{Thgenernum} below). Furthermore, $\alpha V_o$ is the area of overlap for the two balls of equal volume with centers at distance $r$. The constant is equal to $\alpha = \frac{2}{3}-\frac{\sqrt{3}}{2\pi}$. Having said this, the integral can be easily solved to give $g\left(1\right) = \frac{1}{2-\alpha}=0.6215$.
\end{proof}

The probability $g\left(k\right)$, $k\geq 1$, is upper- and lower-bounded in the following Theorem. The proof of the lower bound comes from \cite[Theorem 1.3]{Kozakova06}, whereas the upper bound is a novel contribution.

\begin{Thm}
\label{Thgenernum}
Given a planar p.p.p. $\Phi$, the probability that the generation number, i.e. the distance of the typical atom from the mutually-nearest-neighbours, is $k\in\left\{1,2,3,\ldots\right\}$ can be bounded as follows
\begin{eqnarray}
\label{BoundGNb}
2^{k}\frac{k+1}{\left(k+2\right)!}\geq g\left(k\right)\geq \frac{k}{\left(k+1\right)!}-\mathcal{Q}\left(k\right), & & k\geq 1.
\end{eqnarray}
The correction term $\mathcal{Q}\left(k\right):=q\left(k\right)+\sum_{i=1}^{k-1}\mathcal{Q}\left(i\right)- \mathcal{Q}\left(i-1\right)$, $\mathcal{Q}\left(0\right)=0$, $\mathcal{Q}\left(k\right)\geq 0$, is increasing in $k$. It holds $\mathcal{Q}\left(1\right) = q\left(1\right)=0$ and $\mathcal{Q}\left(k\right)\geq \mathcal{Q}\left(2\right) = \frac{\alpha}{2}$, where $\alpha$ is defined as previously. The lower bound expression is equal to zero for $k\geq 3$.
\end{Thm}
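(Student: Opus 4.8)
The plan is to write $g(k)$ as a single integral over the geometry of the descending nearest-neighbour chain, exactly as in the computation of $g(1)$ above, and then to squeeze that integral between the two claimed bounds. By Lemma~\ref{Lem:NNpairD}, the typical atom has generation number $k$ precisely when the unique arc-path $\mathbf{z}_o\stackrel{1}{\rightarrow}\mathbf{z}_1\stackrel{1}{\rightarrow}\cdots\stackrel{1}{\rightarrow}\mathbf{z}_k$ of $\vec{\mathcal{G}}_{NN}$ issuing from $\mathbf{z}_o$ has length $k$, with strictly decreasing radii $r_o>r_1>\cdots>r_{k-1}=r_k$ and with $\mathbf{z}_{k-1},\mathbf{z}_k$ mutually-nearest-neighbours. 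Each arc $\mathbf{z}_i\stackrel{1}{\rightarrow}\mathbf{z}_{i+1}$ is equivalent to the open ball $\mathcal{B}(\mathbf{z}_i,r_i)$ being free of other atoms, so the whole event $\mathcal{E}_k$ is the emptiness of $\bigcup_{i=0}^{k}\mathcal{B}(\mathbf{z}_i,r_i)$ (with $r_k=r_{k-1}$) together with the ordering of the radii. Placing $\mathbf{z}_o$ at the origin and applying Slivnyak's theorem, with $V_i:=\lambda\pi r_i^2$, I would obtain
\[ g(k)=\lambda^{k}\int \mathbf{1}[\mathcal{E}_k]\,e^{-\lambda\left|\bigcup_{i=0}^{k}\mathcal{B}(\mathbf{z}_i,r_i)\right|}\,d\mathbf{z}_1\cdots d\mathbf{z}_k, \]
and passing to polar coordinates at each centre turns $\lambda\,d\mathbf{z}_{i+1}$ into $dV_i$ multiplied by an angular density supported on the arc of admissible directions.

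For the \emph{upper bound} I would discard every constraint except the ordering $V_o>V_1>\cdots>V_{k-1}$, bound each angular integral by the full circle (so each placement contributes at most $dV_i$), and bound the exposed area from below. The geometric input is that the first ball contributes its whole area while every later ball contributes at least half of its own, i.e.
\[ \left|\bigcup_{i=0}^{k}\mathcal{B}(\mathbf{z}_i,r_i)\right|\;\ge\;V_o+\frac12\sum_{i=1}^{k}V_i\;=\;V_o+\frac12\sum_{i=1}^{k-1}V_i+\frac12 V_{k-1}. \]
This should follow because all chain balls are empty, so each centre $\mathbf{z}_i$ lies outside the interiors of all the others; for $j<i$ one has $r_j\ge r_i$ and $|\mathbf{z}_i-\mathbf{z}_j|\ge r_j$, and a half-space argument then confines $\mathcal{B}(\mathbf{z}_i,r_i)\cap\mathcal{B}(\mathbf{z}_j,r_j)$ to the half of $\mathcal{B}(\mathbf{z}_i,r_i)$ facing $\mathbf{z}_j$. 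Substituting the area bound yields
\[ g(k)\;\le\;\int_{V_o>\cdots>V_{k-1}>0} e^{-\left(V_o+\frac12\sum_{i=1}^{k-1}V_i+\frac12 V_{k-1}\right)}\,dV_o\cdots dV_{k-1}\;=:\;J_k, \]
and integrating the variables out one at a time from $V_o$ downward evaluates $J_k=2^{k}\frac{k+1}{(k+2)!}$; the base cases $J_1=\frac23$, $J_2=\frac12$, $J_3=\frac{4}{15}$ confirm the closed form.

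The step I expect to be the genuine obstacle is exactly the half-area lemma. Showing that each later ball keeps at least half of its area \emph{outside all of its predecessors} — not merely outside the immediately preceding one — requires that the centres $\mathbf{z}_0,\ldots,\mathbf{z}_{i-1}$ all lie in a common half-plane through $\mathbf{z}_i$; a priori the descending chain could curl around $\mathbf{z}_i$, and several empty balls could eat into different halves. Making this rigorous is where the decreasing-radius structure of the nearest-neighbour path must be exploited, and it is the only non-routine ingredient of the upper bound.

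For the \emph{lower bound} I would follow \cite[Theorem~1.3]{Kozakova06}. Its leading term $\frac{k}{(k+1)!}$ is the complementary, disjoint-ball estimate: with the crude bound $|\bigcup\mathcal{B}|\le\sum_{i=0}^{k}V_i=V_o+\cdots+V_{k-2}+2V_{k-1}$ and full angular ranges one gets $\int_{V_o>\cdots>V_{k-1}}e^{-(V_o+\cdots+V_{k-2}+2V_{k-1})}\,dV=\frac{k}{(k+1)!}$. Because full angular freedom overcounts the admissible configurations, one subtracts the telescoping correction $\mathcal{Q}(k)=\sum_{i=1}^{k}q(i)$ satisfying $\mathcal{Q}(k)=q(k)+\mathcal{Q}(k-1)$, where $q(i)$ is the per-step loss forced by the true overlap-restricted angular interval; here $q(1)=0$ and the first genuine correction is $q(2)=\mathcal{Q}(2)=\alpha/2$. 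Since $\mathcal{Q}(k)$ is increasing with $\mathcal{Q}(k)\ge\alpha/2>\frac{k}{(k+1)!}$ for $k\ge3$, the bound is informative only for $k\in\{1,2\}$ and collapses to the trivial $g(k)\ge0$ thereafter, consistent with the statement. This side I would transcribe from the reference and merely verify the recursion for $\mathcal{Q}$.
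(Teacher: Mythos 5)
Your proposal follows essentially the same route as the paper: the same two nested integrals over the ordered volumes $V_o\geq\cdots\geq V_{k-1}$, the same area sandwich (sum of all disks counted with multiplicity for the lower bound, first disk plus half of each subsequent disk for the upper bound), and the same telescoping correction $\mathcal{Q}(k)=q(k)+\mathcal{Q}(k-1)$ with $q(2)=\alpha/2$ forcing the lower bound to vanish for $k\geq 3$; your closed forms and base cases all agree with the paper's. The one step you flag as a genuine obstacle --- that a later ball keeps at least half its area outside \emph{all} of its predecessors, not just the immediately preceding one --- is exactly the point the paper addresses, though not by your half-plane route: it argues that the union area is minimized over atom placements by a collinear configuration in which no three balls overlap (moving the centre of any third overlapping ball only increases pairwise overlap), and in that configuration each $\mathcal{B}_n$ meets only $\mathcal{B}_{n-1}$, losing at most $V_n/2$ because $\mathbf{z}_n$ sits on the boundary of $\mathcal{B}_{n-1}$. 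That argument is itself informal rather than a rigorous extremal proof, so you have correctly isolated the weakest link of the upper bound; everything else in your plan (direct iterated integration versus the paper's order-statistics evaluation of the same integrals, and transcribing the lower-bound correction from Kozakova et al.) is an immaterial difference.
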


\begin{proof}
Consider a typical point $\mathbf{z}_o$ und suppose its generation number is $k$. Then we have a sequence $\left\{\mathbf{z}_o,\mathbf{z}_1,\ldots,\mathbf{z}_N\right\}$ of $N+1$ atoms, among which the last two constitute mutually-nearest-neighbours. From Lemma \ref{Lem:NNpairD} there is a sequence of balls (disks in 2D) with centers the above atoms and decreasing radii, as shown in Fig.\ref{NNexemp5} for six atoms. Suppose that the decreasing sequence of their volumes is $V_o\geq \ldots\geq V_{N-1}=V_N\geq 0$. The probability of such an event is:
\begin{itemize}
\item lower-bounded by the probability that the area $\mathcal{A}^{L}$, equal to the \textit{sum of all N+1 disks without overlapping}, is empty of any other atoms, reduced by a correction term $\mathcal{Q}\left(N\right)$, which is increasing over $N$. We calculate this by
\begin{eqnarray}
\label{GN:LB1}
\int_{V_o\geq\ldots\geq V_{N-2}\geq V_{N-1}\geq 0}e^{-V_o}e^{-V_1}\cdots e^{-V_{N-2}}e^{-2V_{N-1}}\ dV_{N-1}dV_{N-2}\cdots dV_1dV_{o} -\mathcal{Q}\left(N\right).
\end{eqnarray}
\item upper-bounded by the probability that the area $\mathcal{A}^{U}$, equal to the \textit{sum of the first disk $V_o$ and all the rest N semidisks without overlapping}, is empty of any other atoms and we calculate this by
\begin{eqnarray}
\label{GN:UB1}
\int_{V_o\geq\ldots\geq V_{N-2}\geq V_{N-1}\geq 0}e^{-V_o}e^{-V_1/2}\cdots e^{-V_{N-2}/2}e^{-V_{N-1}}\ dV_{N-1}dV_{N-2}\cdots dV_1dV_{o}.
\end{eqnarray}
\end{itemize}
The lower bound is obvious, since the area of all disks without overlapping is larger that the actual area $\mathcal{A}\subset\mathcal{A}^L$. To understand the term $\mathcal{Q}\left(N\right)$ we first provide the intuition for the case $N=2$. Given the typical atom $\mathbf{z}_o$, the atom $\mathbf{z}_1$ may lie anywhere on its boundary and is the center of a ball $\mathcal{B}\left(\mathbf{z}_1,r_1\right)$, with $r_1\leq r_0$. Furthermore, the atom $\mathbf{z}_2$, which is the second part of the mutually-nearest-neighbours pair, will lie on the boundary of $\mathcal{B}\left(\mathbf{z}_1,r_1\right)$. However the first integral in (\ref{GN:LB1}) does not take into account the fact that, the atom $\mathbf{z}_2$ should not lie inside the first ball $\mathcal{B}\left(\mathbf{z}_o,r_o\right)$. This probability should be subtracted from the integral and is shown as $\mathcal{Q}\left(2\right)$. Obviously $\mathcal{Q}\left(1\right) = 0$, because such an event can not be encountered with less than three atoms. 
To calculate the term for $N=2$, we use the bound in \cite[pp.10-11]{Kozakova06}
\begin{eqnarray}
\label{Q2}
\mathcal{Q}\left(2\right) & \leq & \int_{V_o\geq V_1\geq 0} \frac{V\left(\mathcal{B}\left(\mathbf{z}_1,r_o\right)\cap \mathcal{B}\left(\mathbf{z}_o,r_o\right)\right)}{V\left(\mathcal{B}\left(\mathbf{z}_1,r_o\right)\right)}e^{-V_o}e^{-V_1}\ dV_1 dV_o\nonumber\\
& = &  \int_{V_o\geq V_1\geq 0} \frac{a V_o}{V_o}e^{-V_o}e^{-V_1}\ dV_1 dV_o = \frac{\alpha}{2}.
\end{eqnarray}

For higher number of $N$, given that the previous atoms are not contained in the union of balls up to $N-1$, we calculate the correction term $q\left(N\right)$ for the atom with index $N$. However, notice that the integral in (\ref{GN:LB1}) is equal to $\frac{N}{\left(N+1\right)!}< \alpha/2$ for $N\geq 3$, so that no more terms should be calculated for the lower bound, because the expression in (\ref{GN:LB1}) is zero for $N\geq 3$.

To understand the upper bound, consider first the set of $N+1$ overlapping balls with centers the points of the sequence $\left\{\mathbf{z}_o,\mathbf{z}_1,\ldots,\mathbf{z}_N\right\}$ and radii $\left\{r_o,r_1,\ldots,r_N\right\}$. The total area $\mathcal{A}$ depends on the relative positions of the atoms. It is minimum, when the positions are such that the area of  overlapping is maximum. This happens when no three or more balls overlap. If - say - three balls overlap, we can always move the center of the third ball but keeping the radius fixed, so that the area of overlap between the third and the second (or the first) increases. Hence, the minimum area $\mathcal{A}$ results when all atoms fall on a line and have distance $r_n$, $n=0,\ldots,N$ between them. In such a linear constellation, consider the ball $\mathcal{B}_o:=\mathcal{B}\left(\mathbf{z}_o,r_o\right)$. This ball overlaps with $\mathcal{B}_1$ and touches $\mathbf{z}_1$ on its boundary. Obviously, the area of overlap cannot be more that $V_1/2$, leaving exactly the same area $V_1/2$ untouched. Iterating the process, we get $\mathcal{A}^{U}=V_o+V_1/2+\ldots+V_{N-1}/2+V_N/2\subset\mathcal{A}$, but considering that $V_{N-1}=V_N$ we reach the presented bound. The integrals can be easily solved by hand and derive the presented results.

We could simply calculate these nested integrals and be through. However, there is an easier way to find their closed form solution with the following probabilistic interpretation, inspired by \cite[p.10]{Kozakova06}. For the lower bound, the integral (\ref{GN:LB1}) is equal to 1/2 times ($V_{N-1}=V_N$) the probability that $V_o\geq \ldots\geq V_{N-1}\geq 0$, where the $V_i$'s are independent, $V_o,\ldots,V_{N-2}$ are exponentially distributed and $V_{N-1}$ is also exponentially distributed with parameter 2. The probability that $V_o,\ldots,V_{N-2}$ are ordered in this way is $1/(N-1)!$. Since $V_{N-1}$ has the same distribution as the minimum of two independent exponentially distributed random variables, the probability that one of these two will be the smallest among the $N+1$ (previous $N-1$ plus these 2) is $2/(N+1)$. It follows that the first integral is equal to
\begin{eqnarray}
\label{GN:INT1}
\frac{1}{2}\frac{1}{(N-1)!}\frac{2}{N+1} & = & \frac{N}{(N+1)!}.\nonumber
\end{eqnarray}
For the upper bound, we make the transformation $V_i=2U_i$ in the integral, which results in a term $2^{N}$ outside the integral. Consider first the case $N\geq 3$. All $N$ variables are independent exponentially distributed, but $U_o$ and $U_{N-1}$ have parameter 2 and can be considered each as the minimum of two independent exponentially distributed random variables with parameter 1. Suppose $U_{o,a}$ and $U_{o,b}$ are the auxiliary variables, so that $U_o=\min\left\{U_{o,a},U_{o,b}\right\}$ and suppose the same for $U_{N-1}$. The probability in question is $1/2$ (case $U_{o,a}=U_{o,b}$) times $1/2$ (case $U_{N-1,a}=U_{N-1,b}$) the probability that $U_{o,a},U_{o,b}\geq U_1\geq \ldots\geq U_{N-2}$ are ordered this way. The latter is $2/N!$, where the 2 comes since we are indifferent regarding the ordering of $U_{o,a},U_{o,b}$. Furthermore, the probability that one of the two $U_{N-1,a}$ or $U_{N-1,b}$ is the smallest among the $N+2$ (previous $N-2$ plus these 2 plus the 2 referring to $U_o$) is equal to $2/(N+2)$. Altogether, we get
\begin{eqnarray}
\label{GN:INT1}
\frac{1}{2}\frac{1}{2}2^N\frac{2}{N!}\frac{2}{N+2} & = & 2^{N}\frac{N+1}{(N+2)!},\ N\geq 3.\nonumber
\end{eqnarray}
For the case $N=2$, $V_o$ and $V_1$ are both independent exponentially distributed with parameter 1. The probability that $V_o\geq V_1$ is simply $1/2$. For $N=1$, the area $\mathcal{A}^{UB}\left(1\right)=V_o+\frac{1}{2}V_o$.
\end{proof}
Observe that the integrals in (\ref{GN:LB1}) and (\ref{GN:UB1}) consider as variables the volumes $V_i$, with unit density, hence $V_i=\pi r_i^2$. In the case of $\lambda\neq 1$, $V_i=\lambda \pi r_i^2$ and the results will be exactly the same, hence the bounds are density-invariant, exactly as stated for the model NN in \textbf{P.I}. We next present a numerical evaluation of the bounds in Table \ref{TabgkULB} and plot the results for comparison in Fig. \ref{FigGkplotB}. The table and figure, include values of the generation number probability from Monte Carlo simulations. The area simulated has dimensions $10\times 10$ [$m^2$] and the density of the simulated p.p.p. is $2$ [atoms/$m^2$] hence in average $\mathbb{E}\left[N_t\right]=200$ atoms are randomly placed per realization. Finally, the values of the generation number probability result after $5000$ iterations.

The procedure creates a $2\times N_t$ table per iteration ($N_t$ atoms). The atoms are enumerated in the first row and the first neighbour of each is given in the second. Starting from the first column, the table is searched until the root of the first cluster is found. Once this is done the generation number is equal to the number of hops until the root. The procedure continues to the search starting from the second entry and so on.

\begin{table}[h]
\caption{Numerical values for the bounds and Monte Carlo simulation of the generation number probability $g(k)$, for $k=1,\ldots,6$.}
\centering
\begin{tabular}{|l|l|l|l|l|l|l|l|l|l|l|}
\hline
			& $g(1)$ & $g(2)$	& $g(3)$	& $g(4)$	& $g(5)$ & $g(6)$\\\hline
Upper Bound & $\frac{2}{3}$ & $\frac{1}{2}$ & $\frac{4}{15}$ & $\frac{1}{9}$ & $\frac{4}{105}$ & $\frac{1}{90}$\\\hline
Monte Carlo & $0.6207$ &  $0.2756$ & $0.0815$ & $0.0183$ & $0.0033$ & $0.0006$\\\hline
Lower Bound & $\frac{1}{2}$ & $0.1378$ & $0$ & $0$ & $0$ & $0$\\
\hline
\end{tabular}
\label{TabgkULB}
\end{table}

\begin{figure}[h]
\centering
 \includegraphics[trim = 15mm 70mm 10mm 65mm, clip, width=0.7\textwidth]{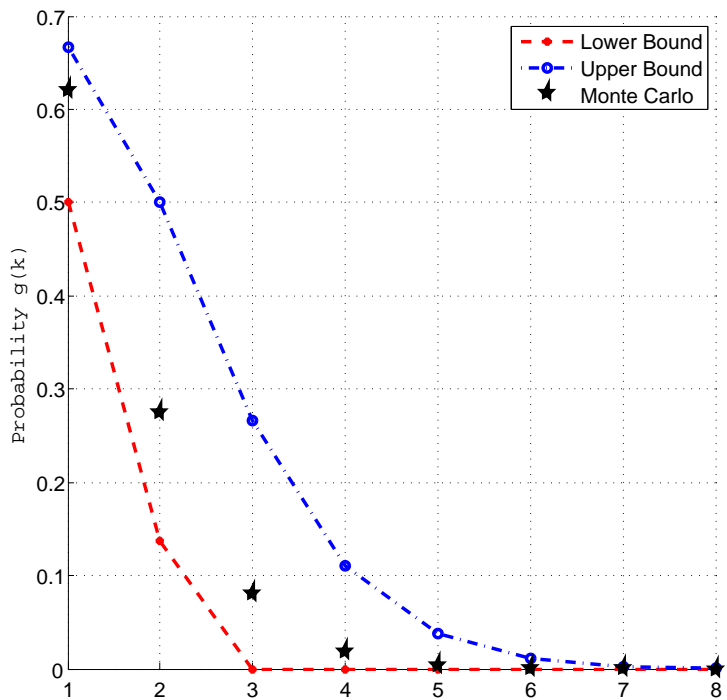}
	\caption{Evaluation of the upper and lower bounds of the generation number probability $g(k)$, for $k\geq 1$.}
\label{FigGkplotB}
\end{figure}

We can also check that the bounds derived in Theorem \ref{Thgenernum} are consistent with the result of Theorem \ref{ThMutualNN}, since $2/3\geq 0.6215\geq 1/2$. Also check how the Monte Carlo value $0.6207\approx 0.6215$.

An alternative way to interpret the generation number probability $g(k)$ is the following. 

\begin{Cor}
\label{Cor1}
The event that $\mathbf{z}_o$ has generation number $k$ implies that $\mathbf{z}_o$ belongs to a cluster (typical cluster $\mathcal{C}_o$) of cardinality at least $k+1$, but not the other way round. Consequently, the following relation holds
\begin{eqnarray}
\label{GNcardi}
\mathbb{P}\left[\mathbf{z}_o\in\mathcal{C}_o,\ card\left(\mathcal{C}_o\right)\geq k+1\right] & \geq & g\left(k\right) \geq \frac{k}{\left(k+1\right)!}-\mathcal{Q}\left(k\right),\ k\geq 1.
\end{eqnarray}
\end{Cor}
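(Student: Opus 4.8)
The plan is to obtain Corollary \ref{Cor1} as a direct consequence of a deterministic event inclusion, namely $\{\mathbf{z}_o\ \text{has generation number}\ k\}\subseteq\{\mathbf{z}_o\in\mathcal{C}_o,\ card(\mathcal{C}_o)\geq k+1\}$, combined with the lower bound already proved in Theorem \ref{Thgenernum}. Once the inclusion is in place, the probabilistic statement follows immediately from monotonicity of $\mathbb{P}$ under set inclusion, so the real content is entirely geometric/combinatorial rather than analytic.

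First I would make the inclusion precise. Suppose that on a given realization the typical atom $\mathbf{z}_o$ has generation number $k$. By the relation $k=N$ between the generation number and the chain length of Lemma \ref{Lem:NNpairD}, this means the directed nearest-neighbour path emanating from $\mathbf{z}_o$, namely $\mathbf{z}_o\stackrel{1}{\rightarrow}\mathbf{z}_1\stackrel{1}{\rightarrow}\cdots\stackrel{1}{\rightarrow}\mathbf{z}_k$, terminates exactly at the unique mutually-nearest-neighbour pair $\{\mathbf{z}_{k-1},\mathbf{z}_k\}$ guaranteed by Lemma \ref{Lem:NNpair}. The point I need is that these $k+1$ atoms are \emph{distinct}: this is precisely what Lemma \ref{Lem:NNpairD} supplies, since the associated radii form the strictly decreasing sequence $r_o>r_1>\cdots>r_{k-1}=r_k$, and a repeated atom along the path would force a repeated radius, a contradiction. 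Because every arc of this path is a nearest-neighbour arc, all $k+1$ atoms lie in the same connected component of $\mathcal{G}_{NN}$, i.e.\ in $\mathcal{C}_o$, whence $card(\mathcal{C}_o)\geq k+1$. This is the asserted inclusion of events.

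Next I would justify why the converse fails, as the statement explicitly claims. By property \textbf{P.III} each cluster is a tree rooted at its mutually-nearest-neighbour pair, and such a tree may branch. Consequently $\mathcal{C}_o$ can contain $k+1$ or more atoms while the single branch running through $\mathbf{z}_o$ (and therefore the generation number of $\mathbf{z}_o$) is short, the surplus atoms sitting on other branches. This shows the two events genuinely differ and that only the one-sided inclusion is available, which is what makes the displayed chain an inequality rather than an identity.

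Finally, applying monotonicity of probability to the inclusion gives $g(k)=\mathbb{P}[\mathbf{z}_o\ \text{has generation number}\ k]\leq\mathbb{P}[\mathbf{z}_o\in\mathcal{C}_o,\ card(\mathcal{C}_o)\geq k+1]$, and concatenating this with the lower bound $g(k)\geq\frac{k}{(k+1)!}-\mathcal{Q}(k)$ from Theorem \ref{Thgenernum} yields (\ref{GNcardi}). I expect essentially no analytic obstacle; the only steps demanding genuine care are invoking Lemma \ref{Lem:NNpairD} to guarantee that the $k+1$ chain atoms are distinct (so the cardinality count is legitimate) and the explicit verification that the inclusion is strict, which fixes the direction of the inequality.
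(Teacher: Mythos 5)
Your proposal is correct and follows exactly the route the paper intends: the corollary is stated without a separate proof precisely because it is the event inclusion $\{\mathbf{z}_o\ \text{has generation number}\ k\}\subseteq\{card(\mathcal{C}_o)\geq k+1\}$ (with distinctness of the $k+1$ chain atoms guaranteed by Lemma \ref{Lem:NNpairD}, or equivalently by the tree property \textbf{P.III}) combined with monotonicity of $\mathbb{P}$ and the lower bound of Theorem \ref{Thgenernum}. Your additional remarks on why the converse inclusion fails (branching of the cluster tree) match the paper's own parenthetical justification, so nothing further is needed.
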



\section{The ancestor number}
\label{SecVI}
The generation number refers to exactly this branch of the cluster (tree), where the typical point is included. On each atom of the sequence $\left\{\mathbf{z}_o,\mathbf{z}_1,\ldots,\mathbf{z}_N\right\}$, other branches could be attached, whose number is limited by the maximum number of atoms having the same atom as nearest neighbour (often called \textit{kissing number} \cite{Kozakova06}, \cite{ZongKiss98}). In this sense, the generation number is appropriate for describing the directed graph $\vec{\mathcal{G}}_{NN}$, but is not necessarily a good indicator of the size of the typical cluster of $\mathcal{G}_{NN}$, so other measures may be more appropriate.

\subsection{Definition and Bounds}

\begin{Def}
\label{DefAncest}
Consider an atom of the p.p.p. $\mathbf{z}_n$ and suppose it belongs to cluster $\mathcal{C}_o$. Then, the \textbf{ancestor number} $k_a$ is defined iteratively:
\begin{enumerate}
\item The two atoms of the mutually-nearest-neighbour pair of $\mathcal{C}_o$ (see Lemma \ref{Lem:NNpair}) are given $k_a=1$ and form the set $\mathcal{S}_1$. By convention $\mathcal{S}_o$ contains just one of the two atoms of the pair, and initiates the process.
\item The atom $z_j\in\mathcal{C}_o$ which is closest to the set $\mathcal{S}_1$, is given $k_a=2$. Include the new atom in $\mathcal{S}_2=\mathcal{S}_1\cup\left\{\mathbf{z}_j\right\}$.
\item Find the closest atom in $\mathcal{C}_o$ to $\mathcal{S}_2$, which is given ancestor number $k_a=3$. Include it in the set $\mathcal{S}_3$.
\item ...
\end{enumerate}
In this way all $\mathbf{z}_n\in\mathcal{C}_o$ receive an ancestor number.
\end{Def}
Obviously, the ancestor number is increasing with the distance of the atom from the origin pair. By definition, an atom of the p.p.p. receives \textit{ancestor number} $k_a=K$, if and only if it is the closest neighbour of a set $\mathcal{S}_{K}$, and it itself has one of the members of $\mathcal{S}_{K}$ as nearest neighbour. When the latter is not true, the atom is not included in the cluster and $card\left(C_o\right)=K$. In this sense, the ancestor number best describes the cluster size, in terms of the count of its members. The furthest atom of a cluster to the origin pair, takes ancestor number equal to the cluster size.

Let us now relate the ancestor number to the \textit{generation number} by means of an example. Assume that the cluster $\mathcal{C}_o$, has $n$ branches stemming from the pair and hence, $n$ atoms with generation number $2$. Among these $n$, the atom with the shortest distance from (one of the two atoms of) the pair receives $k_a=2$. The ancestor number $k_a=3$ is given to the atom with minimum distance from the new set $\mathcal{S}_2$, which includes the two atoms with $k_a=1$ and the one with $k_a=2$. Observe that, this may be either one of the $n-1$ remaining atoms with generation number $2$, or the atom with generation number $3$, following the atom with $k_a=2$ on its branch. This explanation emphasizes the differences between the two numbers (\textit{generation} and \textit{ancestor}).

Based on the ancestor number, given a realization $\phi$ of a p.p.p. we can provide an \textbf{algorithm} to form NN clusters of increasing size:
\begin{enumerate}
\item Identify the mutually-nearest-neighbour pairs of the realization. These constitute the "roots" of all clusters $\mathcal{C}_m$ in the network and have $k_a=1$. This is done by finding the pairs of atoms $\mathbf{z}_i,\mathbf{z}_j$ with the property that the ball $\mathcal{B}\left(\mathbf{z}_i,R_1\right)$, with $R_1=d_{ij}:=\left|\mathbf{z}_i-\mathbf{z}_j\right|$ is empty and has atom $j$ on its boundary and the other way round.
\item Let the balls $\mathcal{B}_i$ and $\mathcal{B}_j$ grow further, with common radius, until their union meets an atom on its boundary, say $\mathbf{z}_k$. The radius is now $R_2=\min\left\{d_{ik},d_{jk}\right\}$. This candidate to be included in the cluster with $k_a=2$, should also have one of the two atoms in $\mathcal{S}_1$ as its nearest neighbour and for this, the ball $\mathcal{B}\left(\mathbf{z}_k,R_2\right)$ should be empty. If true we continue, else we stop and $card\left(\mathcal{C}_o\right)=2$.
\item Let the balls $\mathcal{B}_i$, $\mathcal{B}_j$, $\mathcal{B}_k$ grow further, until their union meets an atom on its boundary, say $\mathbf{z}_l$. The radius is $R_3=\min\left\{d_{il},d_{jl}, d_{kl}\right\}$. This atom takes the ancestor number $k_a=3$, if one of the three atoms in $\mathcal{S}_2$ are its nearest neighbour, in other words, the ball $\mathcal{B}\left(\mathbf{z}_l,R_3\right)$ is empty. etc.
\item ...
\end{enumerate}
How the algorithm gradually finds the elements of a cluster until ancestor number $k_a=3$, are illustrated in Fig.\ref{fig:algoA0}-\ref{fig:algoA3}. By letting the radius $R$ of the balls grow large enough, the algorithm will eventually converge to the $\mathcal{G}_{NN}$ graph. For a cluster with $card\left(\mathcal{C}_o\right)=K+1$, an empty ball of \textbf{the same} radius $R_K$ is centered on each of its atoms. Hence the probability of such cluster to appear, is equal to the probability that this union of $K+1$ balls is empty

\begin{eqnarray}
\label{UnionKempty}
\mathcal{A} := \bigcup_{k_a=1,1,2,\ldots,K}\mathcal{B}\left(\mathbf{z}_{k_a},R_K\right) & = & \emptyset.
\end{eqnarray}
Finally we note here, that the way the ancestor number is defined, gives rise to a sequence of \textit{ancestor radii} $\left\{R_1,\ldots,R_K\right\}$ for a cluster of size $K+1$. Each distance $R_n$ is the minimum distance of the atom with ancestor number $k_a=n$, to the set of its $n$ ancestors $\mathcal{S}_{n-1}$.

\begin{figure}[ht]    
\centering  
\label{fig:COOP1}
	 		\subfigure[Initial Configuration.]{          
           \includegraphics[trim = 5mm 70mm 5mm 70mm, clip, width=0.40\textwidth]{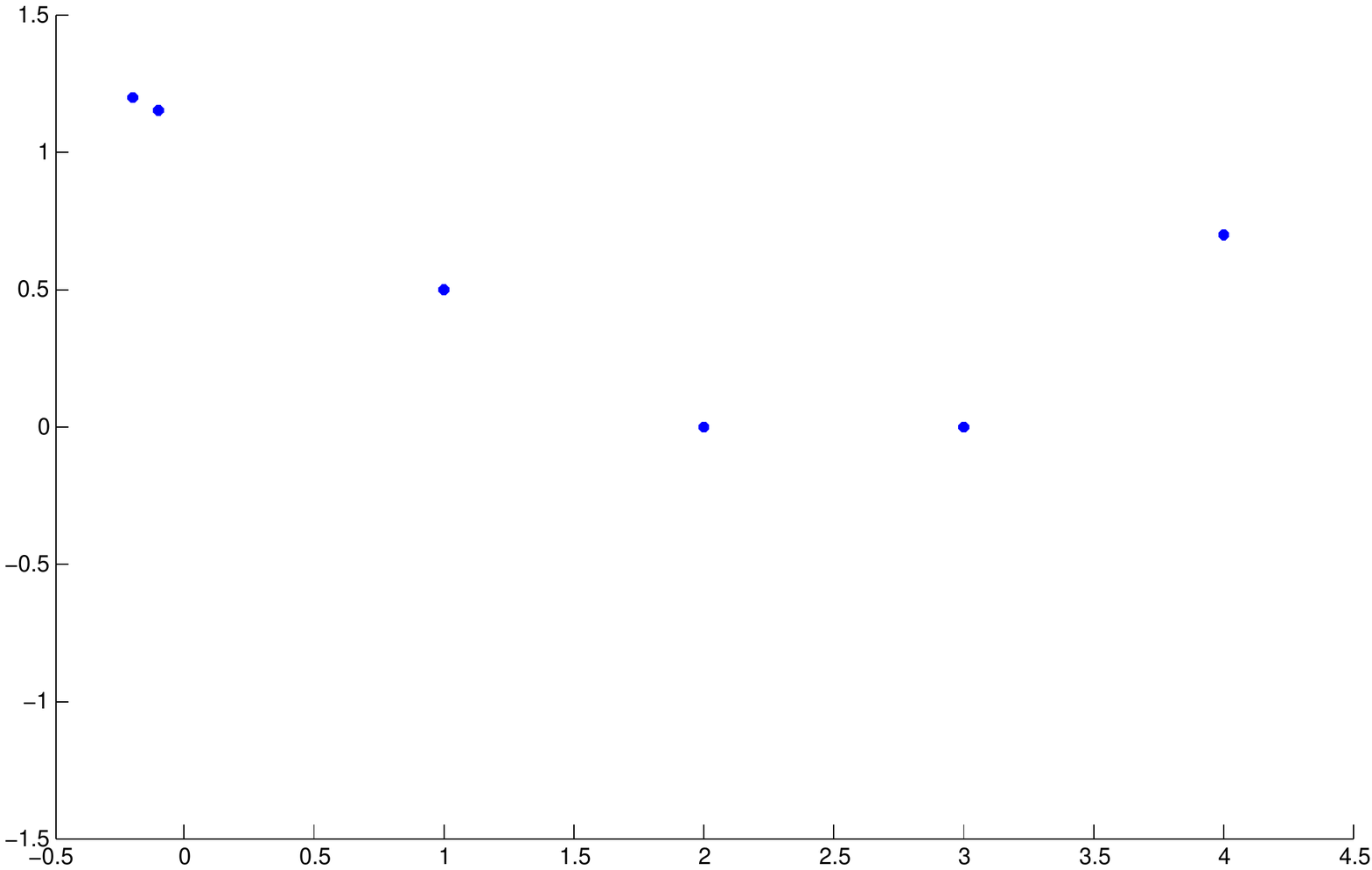}
           \label{fig:algoA0}
           }
            \subfigure[Algorithmic Step 1: pair $k_a=1$.]{          
           \includegraphics[trim = 10mm 80mm 10mm 80mm, clip, width=0.46\textwidth]{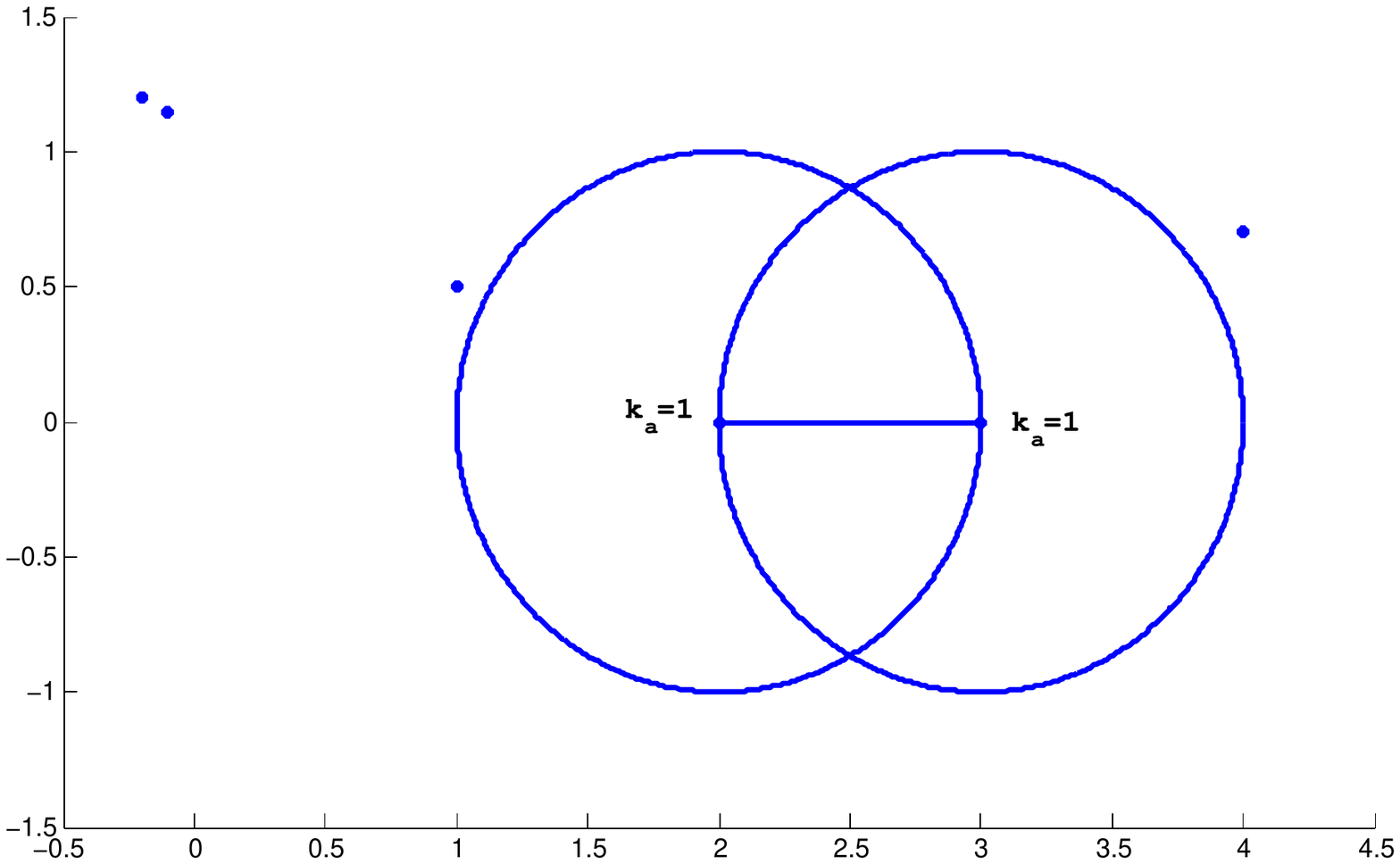}
           \label{fig:algoA1}
           }
            \subfigure[Algorithmic Step 2: $k_a=2$.]{          
           \includegraphics[trim = 10mm 75mm 10mm 70mm, clip, width=0.42\textwidth]{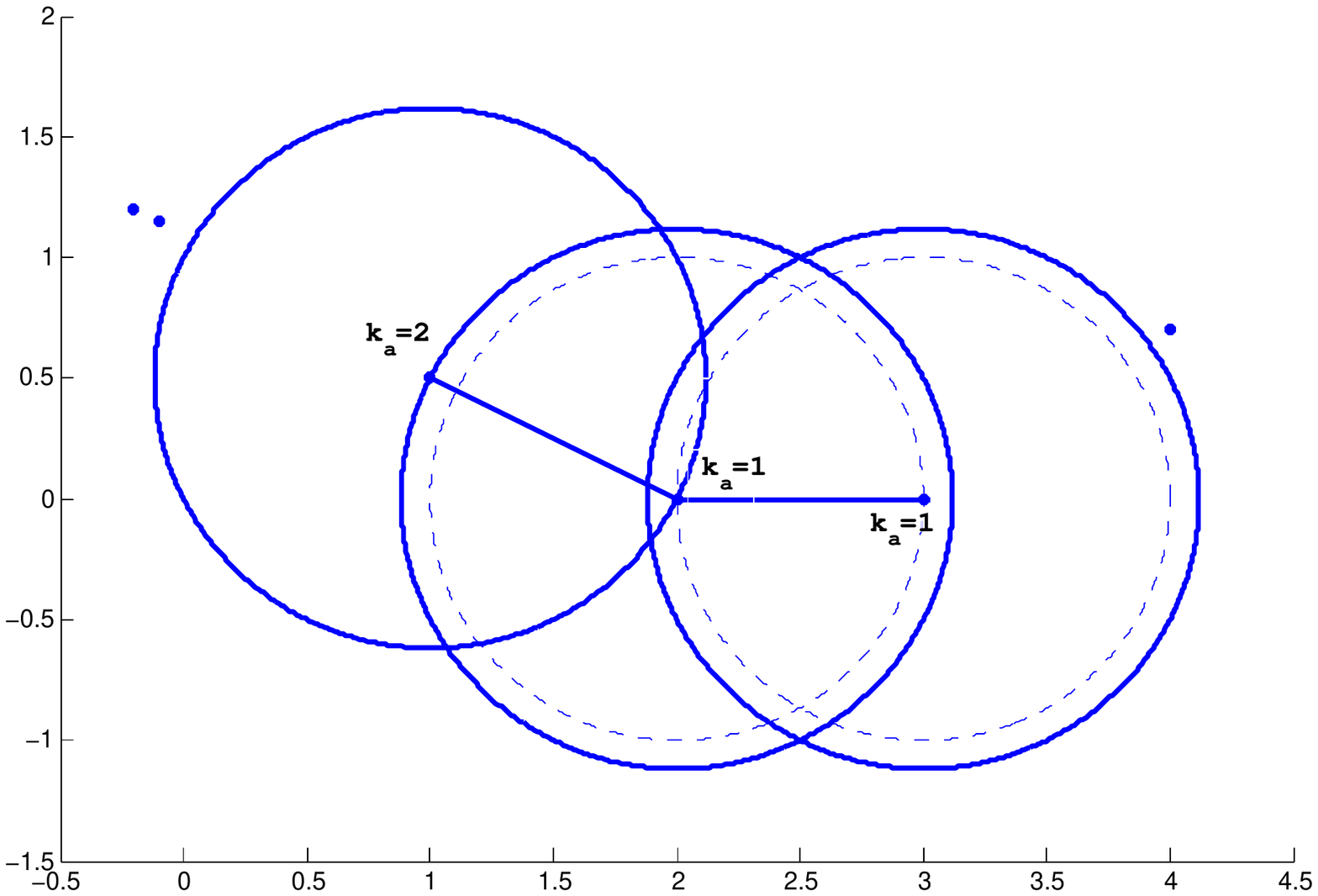}
           \label{fig:algoA2}
           }
            \subfigure[Algorithmic Step 3: $k_a=3$.]{          
           \includegraphics[trim = 10mm 85mm 10mm 70mm, clip, width=0.46\textwidth]{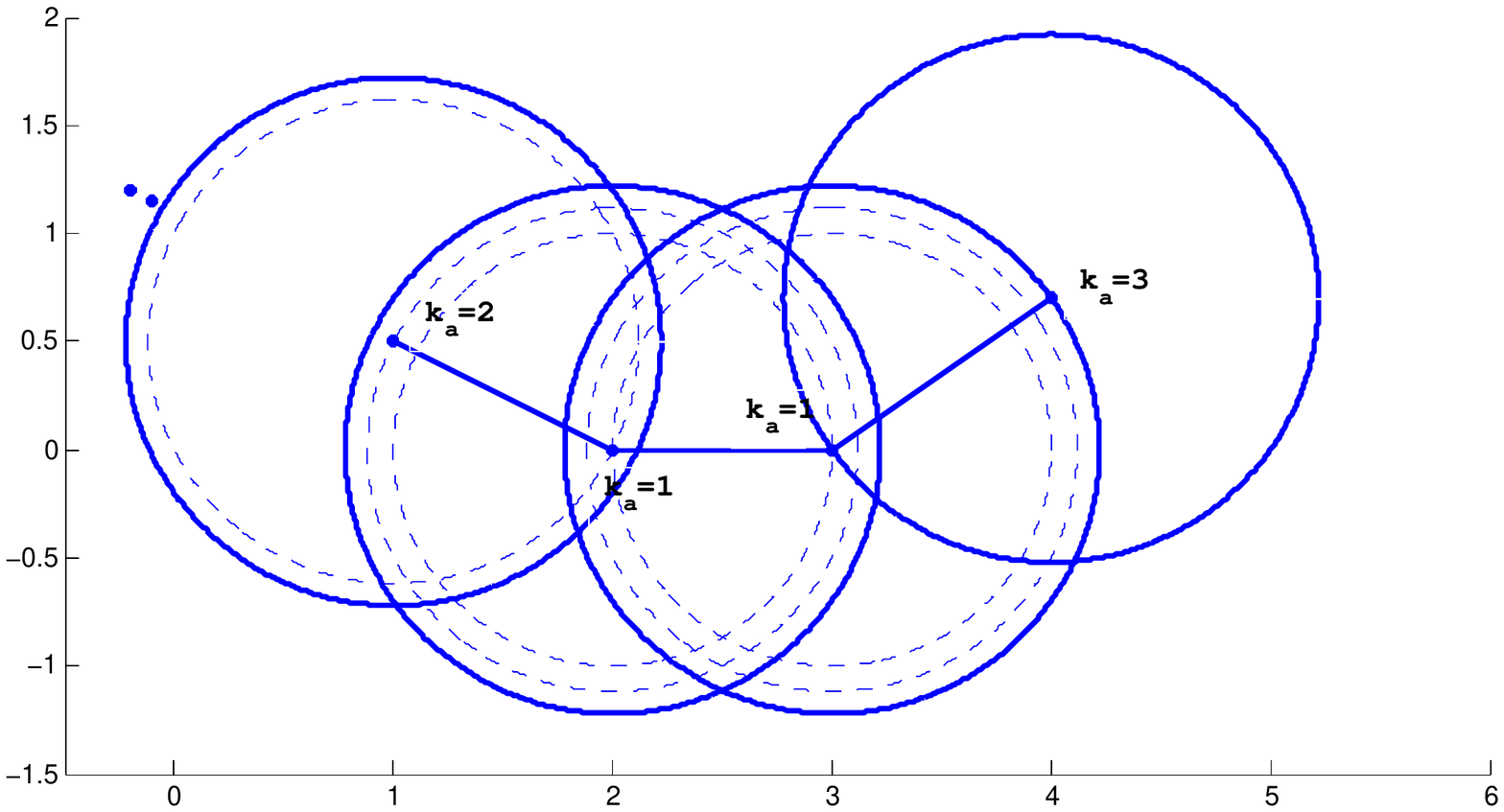}
           \label{fig:algoA3}
           }
           \caption{Implementation of the algorithm to form NN clusters based on the ancestor number. In the example, the algorithm forms a cluster of $4$ elements (atoms). The dashed circles in subfigures (c) and (d) illustrate the size of the circles in the previous steps of the algorithm. In subfigure (d) we can identify $2$ different branches connected to the origin pair, each one of which has a depth of $1$ atom. By further enlarging the area of (d) the candidate for $k_a=4$ would be met, but it is in nearest neighbour relation with a different atom than those of the cluster (the one exactly at its left). This is why the algorithm stops at $k_a=3$ and the cluster has cardinality $4$.}
\end{figure}

The probability that a typical point $\mathbf{z}_o$ has ancestor number $k_a$ is denoted by $g_a\left(k_a\right)$. In order to derive certain bounds, we need the following lemmas.  Let us denote by \textit{root}, one of the two atoms of the mutually-nearest-neighbour pair. We will make use of the \textit{nearest-neighbour-to-root radii} $\left\{\tilde{r}_1,\ldots,\tilde{r}_K\right\}$ which is the sequence of distances of the $n$-th nearest neighbour to the root with $1\leq n\leq K$, the sequence of \textit{ancestor radii} $\left\{R_1,\ldots,R_K\right\}$ defined above, and the sequence of \textit{auxiliary radii} $\left\{R\acute{}_1,\ldots,R\acute{}_K\right\}$ to be defined in Lemma \ref{LemmaAuxK}. We remind the reader here, that a r.v. $Y$ stochastically dominates $X$, and we write $X\preceq_{st} Y$ if $\mathbb{P}\left[X>t\right]\leq \mathbb{P}\left[Y>t\right]$, for all $t$.

\begin{Lem}
\label{LemmaNeighK}
The vector of $K$ \textit{nearest-neighbour-to-root radii} $\left\{\tilde{r}_1,\ldots,\tilde{r}_K\right\}$ stochastically dominates the vector of $K$ \textit{ancestor radii} $\mathbf{R} = \left(R_1,\ldots,R_K\right)$ in the sense that, 
\begin{eqnarray}
\label{stochdomANC1}
R_{k_a}\preceq_{st} \tilde{r}_{k_a}, & 1\leq k_a\leq K.
\end{eqnarray}
The joint probability distribution (p.d.f.) of the distances of $K$ nearest neighbours to the root, given that $\tilde{V}_k=\lambda \pi \tilde{r}_k^2$, $k\in\left\{1,\ldots,K\right\}$ is 
\begin{eqnarray}
\label{pdfneighK}
f_{\tilde{V}_1,\ldots, \tilde{V}_K}\left(\tilde{V}_1,\ldots, \tilde{V}_K\right) = e^{-\tilde{V}_K}, & \tilde{V}_K\geq\ldots \geq \tilde{V}_1 \geq 0.
\end{eqnarray}
\end{Lem}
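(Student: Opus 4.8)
The plan is to treat the two assertions separately: the explicit joint density (\ref{pdfneighK}) follows from the standard radial mapping of a planar p.p.p.\ onto a one-dimensional unit-rate process, while the stochastic domination (\ref{stochdomANC1}) follows from a coupling in which both radius sequences are read off the \emph{same} realisation $\phi$, exploiting that the ancestor set always contains the root.

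For (\ref{pdfneighK}) I would fix the root at the origin (invoking Slivnyak's theorem, so that conditioning on an atom of $\Phi$ at the root leaves the remaining atoms a p.p.p.\ of intensity $\lambda$) and apply the map $\mathbf{z}\mapsto \lambda\pi|\mathbf{z}|^{2}$. Under this map the planar p.p.p.\ becomes a homogeneous p.p.\ of unit rate on $[0,\infty)$, whose ordered points are precisely the scaled volumes $\tilde V_{1}\le \tilde V_{2}\le\cdots$. Hence the increments $\tilde V_{1},\ \tilde V_{2}-\tilde V_{1},\dots$ are i.i.d.\ $\mathrm{Exp}(1)$, and since the linear change of variables from these increments to $(\tilde V_{1},\dots,\tilde V_{K})$ has unit Jacobian, the joint density on the ordered region $\tilde V_{K}\ge\cdots\ge \tilde V_{1}\ge 0$ equals $\prod_{i=1}^{K}e^{-(\tilde V_{i}-\tilde V_{i-1})}=e^{-\tilde V_{K}}$ with $\tilde V_{0}:=0$, which is exactly (\ref{pdfneighK}).

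For (\ref{stochdomANC1}) the key geometric observation is that, by Definition \ref{DefAncest}, the ancestor set $\mathcal{S}_{n-1}$ always contains the root, so for every radius $t$ the swept region $\bigcup_{\mathbf{z}\in\mathcal{S}_{n-1}}\mathcal{B}(\mathbf{z},t)$ contains the single disk $\mathcal{B}(\mathrm{root},t)$. I would make this pathwise using the fact that, while the cluster is being built, the emptiness condition of the algorithm (the added atom's ball of radius $R_{n}$ is empty) forces $R_{1}\le R_{2}\le\cdots$; a growth process with a common, monotonically increasing radius collects exactly the connected component of the root in the geometric graph $G_{t}$ that joins two atoms whenever their distance is at most $t$. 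At the threshold $t=\tilde r_{k_a}$ the root is joined directly to each of its first $k_a$ nearest neighbours, so its component already contains at least $k_a$ non-root atoms; therefore the ancestor process has collected at least $k_a$ atoms by radius $\tilde r_{k_a}$, which is precisely $R_{k_a}\le \tilde r_{k_a}$. Pathwise domination of the radii implies the claimed $R_{k_a}\preceq_{st}\tilde r_{k_a}$ coordinatewise. An alternative, purely distributional route is to compare survival functions conditionally: given the past of the construction, $\{R_{n}>t\}$ requires the annulus swept by the whole set $\mathcal{S}_{n-1}$ to be empty, whereas $\{\tilde r_{n}>t\}$ only requires the corresponding annulus around the root to be empty; the former region contains the latter, its void probability is therefore smaller, and integrating over the past yields the marginal domination.

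The step I expect to be delicate is not the inequality itself but its bookkeeping: the ancestor radii $R_{k_a}$ exist only up to the (random) cluster cardinality, since the construction halts as soon as the nearest external atom fails the nearest-neighbour test, and it is exactly this halting that can break the monotonicity on which the connected-component picture rests. I would therefore phrase (\ref{stochdomANC1}) on the event $\{card\left(\mathcal{C}_o\right)\ge K+1\}$, on which all $K$ ancestor radii are defined and genuinely increasing, and verify that beyond the stopping index no comparison is being asserted. Reconciling this range restriction cleanly with the \emph{unconditional} law (\ref{pdfneighK}), so that the coordinatewise domination remains legitimate, is the main obstacle.
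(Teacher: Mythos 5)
Your proposal is correct, and for the domination part it actually contains the paper's proof as your ``alternative distributional route'': the paper fixes the past ($R_{k_a-1}=s$), observes that $\left\{R_{k_a}>t\right\}$ requires the region swept by all of $\mathcal{S}_{k_a-1}$ between radii $s$ and $t$ to be void, compares its void probability $e^{-V\left(\bigcup_n\mathcal{B}_n\left(t\right)\right)+V\left(\bigcup_n\mathcal{B}_n\left(s\right)\right)}$ with that of the single annulus around the root, and concludes $R_{k_a}\preceq_{st}\tilde{r}_{k_a}$ exactly as you describe. Your primary, pathwise argument is a genuinely different and in fact stronger route: since $\mathcal{S}_{k_a-1}$ contains $k_a$ atoms including the root, at least one of the root's first $k_a$ nearest neighbours lies outside $\mathcal{S}_{k_a-1}$ yet within distance $\tilde{r}_{k_a}$ of it, so $R_{k_a}\leq\tilde{r}_{k_a}$ holds realisation by realisation; this buys coordinatewise domination without any conditioning argument. (Your detour through connected components of the geometric graph is the one shaky step --- the ancestor construction is not a component-collecting process because of the nearest-neighbour acceptance test --- but the containment of swept regions, or the pigeonhole count just given, already suffices, so nothing essential is lost.) For (\ref{pdfneighK}) your mapping-theorem derivation is equivalent to the paper's, which obtains $f_{\tilde{V}_K|\tilde{V}_{K-1}}\left(t|s\right)=e^{-\left(t-s\right)}$ and unconditions iteratively; both amount to saying the scaled volumes form a unit-rate Poisson process on the half-line. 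Finally, the bookkeeping concern you raise --- that $R_{k_a}$ is only defined on the event that the cluster reaches size $k_a+1$, and that conditioning the origin atom to be the root of a mutually-nearest-neighbour pair is not plain Palm conditioning --- is legitimate, but the paper's own proof is silent on both points, so your proposal is at least as complete as the published argument.
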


\begin{proof}
The domination result, can be derived, by first exploring the case $k_a=2$ and applying similar arguments to higher ancestor numbers. Suppose the mutually-nearest-neighbour pair consists of atoms $\mathbf{z}_i,\mathbf{z}_j$. Obviously $R_1=\tilde{r}_1$. From the second step of the algorithm, the two balls around these atoms should expand their volume with common radius, until they touch $\mathbf{z}_k$, which is the candidate for $k_a=2$. This will occur at a distance $R_2\leq \tilde{r}_2$, because $\mathbf{z}_k$, might be the second neighbour of $\mathbf{z}_j$ instead of the root atom $\mathbf{z}_i$. In this sense, the probability that $\mathbf{z}_k$ is met at $R_2=\tilde{r}_2=t$, when $R_1=\tilde{r}_1=s$ can be calculated for both cases to be $e^{-V\left(\mathcal{B}_i\left(t\right)\cup\mathcal{B}_j\left(t\right)\right)+V\left(\mathcal{B}_i\left(s\right)\cup\mathcal{B}_j\left(s\right)\right)}<e^{-V\left(\mathcal{B}_i\left(t\right)\right)+V\left(\mathcal{B}_j\left(s\right)\right)}$ and as a result $\tilde{r}_2$ stochastically dominates $R_2$.

To further derive the p.d.f. for each $n$, we should first calculate the conditional probability that $\tilde{V}_K\leq t$, given that $\tilde{V}_{K-1}=s$, with $t\geq s$. In the p.p.p. case, this is equal to 
\begin{eqnarray}
\label{KcondK-1}
\mathbb{P}\left[\tilde{V}_K\leq t|\tilde{V}_{K-1}=s\right] & = & 1 - e^{-\left(t-s\right)}\Rightarrow\nonumber\\
f_{\tilde{V}_K|\tilde{V}_{K-1}}\left(t|s\right) & = & e^{-\left(t-s\right)}
\end{eqnarray}
Obviously by unconditioning iteratively from $\tilde{V}_{K-1}$ to $\tilde{V}_1$ and using (\ref{KcondK-1}) with appropriate replacement of the index, we reach the result. The case for $K=2$ has already been shown and used in \cite{BaccGiovAsilomar13} and \cite{AGFBarXiv13}.
\end{proof}

\begin{Lem}
\label{LemmaAuxK}
The vector of $K$ \textit{ancestor radii}  $\mathbf{R} = \left(R_1,\ldots,R_K\right)$ stochastically dominates the auxiliary $K$-length vector $\mathbf{R\acute{}} = \left(R\acute{}_1,\ldots,R\acute{}_K\right)$ in the sense that, 
\begin{eqnarray}
\label{stochdomANC2}
R\acute{}_{k_a}\preceq_{st} R_{k_a}, & 1\leq k_a\leq K,
\end{eqnarray}
The vector of auxiliary random variables (r.v.'s) $\mathbf{R\acute{}}$ has joint p.d.f. (with $V\acute{}_{k}=\lambda\pi R\acute{}_k^2$)
\begin{eqnarray}
\label{pdfANC}
f_{V\acute{}_1,\ldots, V\acute{}_K}\left(V\acute{}_1,\ldots,V\acute{}_K\right) = K!\ e^{-KV\acute{}_K}e^{V\acute{}_{K-1}}\cdots e^{V\acute{}_{1}}, & V\acute{}_K\geq\ldots \geq V\acute{}_1 \geq 0. 
\end{eqnarray}
\end{Lem}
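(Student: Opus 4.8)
The plan is to mirror the two-part structure of Lemma \ref{LemmaNeighK}: first identify the auxiliary process whose increments give the stated joint density, and then establish the stochastic ordering through a single geometric volume inequality, realised by a coupling on the underlying point process. The guiding picture is that the ancestor algorithm, during the phase that fixes the atom of ancestor number $n$, grows a union of equal-radius balls centred at the $\left|\mathcal{S}_{n-1}\right|=n$ atoms of $\mathcal{S}_{n-1}$, and that this union is always contained in the disjoint configuration of the same $n$ balls. Replacing the true union by the disjoint one maximises the rate at which fresh territory is swept, and hence produces first-contact radii that are systematically too small; this is exactly what defines the auxiliary vector $\mathbf{R\acute{}}$.

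First I would compute the density. Conditioning on the atoms found so far and invoking the spatial Markov property of the p.p.p.\ over the already-explored (empty) region, the disjoint-configuration first-contact radius of phase $n$ sweeps a shell of volume exactly $n\,(V-V\acute{}_{n-1})$, where $V=\lambda\pi R^2$. Hence the increments $\Delta_n:=V\acute{}_n-V\acute{}_{n-1}$ (with $V\acute{}_0=0$) are independent with $\Delta_n\sim\mathrm{Exp}(n)$, the rate being precisely the number of balls grown in that phase. The joint density of the ordered cumulative sums $V\acute{}_1\le\cdots\le V\acute{}_K$ is then $\prod_{n=1}^{K} n\,e^{-n\Delta_n}$; using $\prod_{n=1}^{K} n=K!$ together with the telescoping identity $\sum_{n=1}^{K} n\Delta_n = K V\acute{}_K-\sum_{i=1}^{K-1}V\acute{}_i$, this collapses to $K!\,e^{-KV\acute{}_K}e^{V\acute{}_{K-1}}\cdots e^{V\acute{}_1}$, which is exactly (\ref{pdfANC}). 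Density-invariance in $\lambda$ is automatic, since everything is phrased in the volume variables.

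For the ordering I would build a coupling on the p.p.p.\ itself. Write $W_n:=V_n-V_{n-1}$ for the true per-ball volume increments, with $V_n=\lambda\pi R_n^2$. The key geometric fact is that the newly swept shell of the true union is contained in the union of the $n$ individual spherical shells, so that $V\!\left(\bigcup_{i\in\mathcal{S}_{n-1}}\mathcal{B}(\mathbf{z}_i,R)\right)-V\!\left(\bigcup_{i\in\mathcal{S}_{n-1}}\mathcal{B}(\mathbf{z}_i,R_{n-1})\right)\le n\,(V-V_{n-1})$, with equality in the disjoint case. By the spatial Markov property the conditional survival of $W_n$ given the past equals $\exp$ of minus this swept volume, hence is at least $e^{-nw}$; therefore $W_n$ conditionally stochastically dominates an $\mathrm{Exp}(n)$ variable, which I couple so that $W_n\ge\Delta_n$ almost surely. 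Summing gives $V_n=\sum_{j\le n}W_j\ge\sum_{j\le n}\Delta_j=V\acute{}_n$ for all $n$ simultaneously, i.e.\ $R\acute{}_n\le R_n$ a.s., yielding the componentwise domination (\ref{stochdomANC2}).

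The step I expect to require the most care is the bookkeeping of the sequential conditioning: the explored empty region at the start of phase $n$ must be exactly $\bigcup_{i\in\mathcal{S}_{n-1}}\mathcal{B}(\mathbf{z}_i,R_{n-1})$, so I must verify that the extra empty-ball (nearest-neighbour) test used in the algorithm to admit each new atom only enlarges this region in a way already absorbed into it, and that the filtration generated by the successive discoveries is the correct one for the Markov property to apply phase by phase. Once that scaffolding is in place, the volume inequality and the identification of the increments as $\mathrm{Exp}(n)$ are routine, and the passage to the closed-form density is purely algebraic.
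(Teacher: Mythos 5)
Your proposal follows essentially the same route as the paper: the auxiliary vector is realised as the disjoint-ball (``far away'') configuration, the joint density comes from independent $\mathrm{Exp}(n)$ volume increments whose product collapses via the telescoping identity to (\ref{pdfANC}), and the domination follows from comparing the swept volume of the true union of $n$ overlapping balls with the sum of $n$ disjoint shells. Your explicit almost-sure coupling of the increments is a slightly more careful rendering of the paper's phase-by-phase comparison of conditional tail probabilities, and the conditioning subtlety you flag (the nearest-neighbour admission test for each new atom) is likewise left implicit in the paper's own proof.
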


\begin{proof}
The auxiliary vector $\mathbf{R}\acute{}$ is defined iteratively as follows. Let the first element $R\acute{}_1$ be the distance between the two atoms $\mathbf{z}_i$ (root), $\mathbf{z}_j$ of the pair, exactly as the ancestor distance, and $R\acute{}_1:=R_1$. For analytical purposes also $R\acute{}_0:=0$. Ancestor distance $R_2$ is explained in Step 2 of the previous algorithm to be the closest distance of an atom $\mathbf{z}_k$ to one of the two atoms $\mathbf{z}_i, \mathbf{z}_j$. To find this geometrically, the balls $\mathcal{B}_i, \mathcal{B}_j$ should grow with common radius until they meet $\mathbf{z}_k$ on their common boundary. Thus, $\mathbf{z}_k$ is the candidate for the ancestor number $k_a=2$. The probability that $R_2>t$, given that $R_1=s$ is equal to $e^{-V\left(\mathcal{B}_i\left(t\right)\cup \mathcal{B}_j\left(t\right)\right)+V\left(\mathcal{B}_i\left(s\right)\cup \mathcal{B}_j\left(s\right)\right)}$ for the p.p.p. case. Let the auxiliary random variable $R\acute{}_2$ be the common radius of the balls $\mathcal{B}\acute{}_i$, $\mathcal{B}\acute{}_j$ when one of them meets $\mathbf{z}_k\acute{}$ on its boundary, but in this case the one ball is "far away" from the other, hence there is no overlap. Obviously, this probability $e^{-V\left(\mathcal{B}\acute{}_i\left(t\right)\right)-V\left(\mathcal{B}\acute{}_j\left(t\right)\right)+V\left(\mathcal{B}\acute{}_i\left(s\right)\right)+V\left(\mathcal{B}\acute{}_j\left(s\right)\right)}<e^{-V\left(\mathcal{B}_i\left(t\right)\cup \mathcal{B}_j\left(t\right)\right)+V\left(\mathcal{B}_i\left(s\right)\cup \mathcal{B}_j\left(s\right)\right)}$ when $R_2=R\acute{}_2=t$ and $R_1=R\acute{}_1=s$, hence $R\acute{}_2$ is stochastically dominated by $R_2$. Repeating this argument stepwise for $3,\ldots,K$, we reach the domination result in (\ref{stochdomANC2}).

We enumerate the atoms by their ancestor number, giving to the root the number $0$. Let $V\acute{}_{n,K}$, $n=0,\ldots,K-1$ be the volume of the ball centered at $\mathbf{z}_n$, when the atom with ancestor number $K$ is touched by one of the $K$ balls (ancestors with balls $\mathcal{B}_0,\ldots,\mathcal{B}_{K-1}$). To find the joint p.d.f. of $\mathbf{R\acute{}}$ we first need to write the conditional probability 
\begin{eqnarray}
\label{JointpdfRacc}
& & \mathbb{P}\left[\min\left\{V\acute{}_{0,K},\ldots,V\acute{}_{K-1,K}\right\}\leq t|V\acute{}_{0,K-1}=\ldots =V\acute{}_{K-1,K-1}=s\leq t\right]  \nonumber\\
& \stackrel{(a)}{=} & 1-\left(\mathbb{P}\left[V\acute{}_{0,K}> t|V\acute{}_{0,K-1}=s\leq t\right]\right)^{K} \nonumber\\
& \stackrel{p.p.p.\ def}{=} & 1-\left(e^{-t+s}\right)^{K},\ K\geq 1.
\end{eqnarray}
To better clarify the formula, we give here two examples. For $K=1$, $\mathbb{P}\left[V\acute{}_{0,1} \leq t|V_{0}=0\right]=1-e^{-t}$ because per definition (above) $R\acute{}_0=0$. For $K=2$, $\mathbb{P}\left[\min\left\{V\acute{}_{0,2},V\acute{}_{1,2}\right\} \leq t|V_{0,1}=V_{1,1}=s\right]=1-e^{-2t+2s}$ (see also Fig.\ref{fig:algoA2} for an illustrative explanation). In the equations, (a) comes from the fact that the $K$ events of a ball centered at $\mathbf{z}_n$  with radius $R\acute{}_K$ having an empty ring in $\left[R\acute{}_K,R\acute{}_{K-1}\right]$ (i.e. the volume $V\acute{}_{n,K}-V\acute{}_{n,K-1}$, $n=0,\ldots,K-1$ is empty of atoms) are i.i.d. because the p.p.p. is homogenous and the balls are considered "far away" from each other.
The joint p.d.f. in (\ref{pdfANC}) results by first differentiating over $t$
\begin{eqnarray}
\label{JointpdfRacc2}
f_{V\acute{}_{0,K}|V\acute{}_{0,K-1}} & = & Ke^{-Kt+Ks},
\end{eqnarray}
and further applying Bayes' rule with the product
\begin{eqnarray}
\label{JointpdfRacc3}
\prod^{K}_{n=1}f_{V\acute{}_{0,n}|V\acute{}_{0,n-1}} & = & \prod_{n=1}^K ne^{-ns_n+ns_{n-1}},\ \ s_0=0.\nonumber
\end{eqnarray}
\end{proof}

\begin{Thm}
\label{AncestorBoundsUD}
Given a planar p.p.p. $\Phi$ the probability $g_a\left(k_a\right)$ that the typical atom $\mathbf{z}_o$ has ancestor number $k_a$ can be bounded as follows
\begin{eqnarray}
\label{AB2}
\frac{k_a!}{\prod_{n=0}^{k_a-1}\left(k_a+1-\alpha-n\left(1-\frac{1}{\pi}\right)\right)}\geq g_a\left(k_a\right)\geq \frac{1}{\left(1+k_a\left(1-\alpha\right)\right)^{k_a}}-\tilde{\mathcal{Q}}\left(k_a\right), & & k_a\geq 1.
\end{eqnarray}
The correction term $\tilde{\mathcal{Q}}\left(k_a\right):=\tilde{q}\left(k_a\right)+\sum_{i=1}^{k_a-1}\tilde{\mathcal{Q}}\left(i\right)- \tilde{\mathcal{Q}}\left(i-1\right)$, $\tilde{\mathcal{Q}}\left(0\right)=0$, $\tilde{\mathcal{Q}}\left(k_a\right)\geq 0$, is increasing in $k_a$. It holds $\tilde{\mathcal{Q}}\left(1\right) = \tilde{q}\left(1\right)=0$ and $\tilde{\mathcal{Q}}\left(k_a\right)\geq \tilde{\mathcal{Q}}\left(2\right) = 0.0795$. The lower bound expression is equal to zero for $k_a\geq 3$.
The constant $\alpha=\frac{2}{3}-\frac{\sqrt{3}}{2\pi}$.
\end{Thm}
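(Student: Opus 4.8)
The plan is to follow the architecture of the proof of Theorem~\ref{Thgenernum}, replacing the generation chain by the ancestor construction and the single radius sequence by the three stochastically ordered sequences furnished by Lemmas~\ref{LemmaNeighK} and~\ref{LemmaAuxK}. The entry point is~(\ref{UnionKempty}): for the typical atom $\mathbf{z}_o$ to attain ancestor number $k_a=K$, the union $\mathcal{A}$ of the $K+1$ equal balls of radius $R_K$ centred at the cluster atoms must be empty, so $g_a(K)$ is the expected void probability $\mathbb{E}\,e^{-\lambda|\mathcal{A}|}$ taken over the law of the ancestor radii. Since $|\mathcal{A}|$ has no closed form, I would sandwich it between two geometric extremes and sandwich the intractable ancestor law between the two explicit laws of the lemmas, pairing the estimates so that in each bound the geometry and the radius law push the void probability in the \emph{same} direction; this is precisely the purpose of the domination chain $R\acute{}_{k_a}\preceq_{st}R_{k_a}\preceq_{st}\tilde r_{k_a}$. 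As a consistency anchor, at $k_a=1$ both bounds must collapse to $g_a(1)=g(1)=\frac{1}{2-\alpha}=0.6215$ of Theorem~\ref{ThMutualNN}.

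For the lower bound I would \emph{maximise} $|\mathcal{A}|$. Building the union incrementally, the ball added at step $n$ meets the current set at distance $R_n\leq R_K$, so it overlaps its nearest ancestor by at least $\alpha V_K$ and contributes at most $(1-\alpha)V_K$ of fresh area; summing over the $K$ non-root balls gives $|\mathcal{A}|\leq(1+K(1-\alpha))V_K$. This upper estimate lowers the void probability pointwise, and replacing the ancestor radii by the stochastically larger nearest-neighbour-to-root radii of Lemma~\ref{LemmaNeighK}, whose joint density on the simplex is $e^{-\tilde V_K}$, lowers it further, so both moves are in the lower-bound direction. The integral $\int_{\tilde V_K\geq\cdots\geq\tilde V_1\geq 0}e^{-K(1-\alpha)\tilde V_K}\,e^{-\tilde V_K}\,d\tilde{\mathbf V}$ then integrates out the simplex to $(1+k_a(1-\alpha))^{-k_a}$. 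As in Theorem~\ref{Thgenernum} this ignores the requirement that the atom chosen at each step not already lie in an earlier ball; subtracting that forbidden event gives the correction $\tilde{\mathcal{Q}}(k_a)$, set up by the same recursion with $\tilde{\mathcal{Q}}(1)=0$ and $\tilde{\mathcal{Q}}(2)$ obtained from the overlap integral analogous to~(\ref{Q2}) adapted to the ancestor geometry, giving $0.0795$. Monotonicity of $\tilde{\mathcal{Q}}$ then drives the bare term below $\tilde{\mathcal{Q}}(2)$ for $k_a\geq 3$, so the stated lower bound collapses to $0$ there.

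For the upper bound I would \emph{minimise} $|\mathcal{A}|$. Exactly as in Theorem~\ref{Thgenernum}, the area is smallest when no three balls overlap and the centres are collinear, so each intermediate disk contributes only the circular segment it cuts from its neighbour; estimating that segment is where the constant $\tfrac{1}{\pi}$ enters, and combined with the $e^{+V\acute{}_n}$ factors of the auxiliary density it produces the $1-\tfrac{1}{\pi}$ terms in the denominator, while the extremal (mutually-nearest-neighbour) end retains the $\alpha$ term. This lower estimate of $|\mathcal{A}|$ raises the void probability, and replacing the ancestor radii by the stochastically smaller auxiliary radii of Lemma~\ref{LemmaAuxK}, with density $f_{V\acute{}}=K!\,e^{-KV\acute{}_K}e^{V\acute{}_{K-1}}\cdots e^{V\acute{}_1}$, raises it again, so both moves are in the upper-bound direction. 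Integrating the minimal-area void probability against $f_{V\acute{}}$ over the simplex, by the same independent-exponential ordering device used for~(\ref{GN:UB1}), collapses to $k_a!/\prod_{n=0}^{k_a-1}\left(k_a+1-\alpha-n\left(1-\tfrac{1}{\pi}\right)\right)$, the factorial being inherited from the $K!$ prefactor of $f_{V\acute{}}$.

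The step I expect to be the genuine obstacle is the pair of geometric area estimates: proving that the collinear, pairwise-only-overlap configuration truly minimises the union area at fixed radii, and extracting the exact $\tfrac{1}{\pi}$ coefficient of the incremental segment, are the delicate parts, together with verifying $\tilde{\mathcal{Q}}(2)=0.0795$ from the adapted overlap integral. A secondary point is to confirm, in each bound, that the geometric estimate and the stochastic-domination replacement reinforce rather than compete — which in turn requires the joint (coupling) form of domination underlying Lemmas~\ref{LemmaNeighK} and~\ref{LemmaAuxK}, so that $\mathbb{E}\,e^{-\lambda|\mathcal{A}|}$ of the monotone integrand is correctly ordered. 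Once that orientation is fixed, the closed-form evaluation of the nested integrals is routine, following the templates already established for Theorem~\ref{Thgenernum}.
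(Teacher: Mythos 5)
Your proposal follows the paper's proof essentially step for step: the same framing of $g_a(k_a)$ as the void probability of the union in (\ref{UnionKempty}), the same sandwich of that union's area between the $\left(1+k_a(1-\alpha)\right)V_{k_a}$ overestimate and the $(1-\alpha)V_{k_a}+\sum_{n}V_n/\pi$ underestimate, the same pairing of these with the stochastic dominations $R\acute{}_{k_a}\preceq_{st}R_{k_a}\preceq_{st}\tilde r_{k_a}$ so that geometry and radius law push in the same direction, the same correction term with $\tilde{\mathcal{Q}}(2)=\alpha/\left(1+2(1-\alpha)\right)^2=0.0795$ forcing the lower bound to $0$ for $k_a\geq 3$, and the same nested-integral evaluations. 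The one detail you leave open, the origin of the $1/\pi$, is obtained in the paper not from a collinearity argument as in Theorem \ref{Thgenernum} but from the exact area of $\mathcal{B}_n(R_{k_a})\setminus\mathcal{B}_{n-1}(R_{k_a})$ in (\ref{SetminusTwoBalls}) together with the Maclaurin bound $\arccos(x)\leq\frac{\pi}{2}-x$ and $\sqrt{V_{k_a}V_n}\geq V_n$.
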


\begin{proof}
For the lower bound we first choose some radius $R_{k_a}=t$. Then the volume of the area $\mathcal{A}^{LB}\left(t\right)$
\begin{eqnarray} 
\label{VolumeLB}
V\left(\mathcal{A}^{LB}\left(t\right)\right) 	& = & k_a \left(1-\alpha\right)V_0\left(t\right) \geq V\left(\bigcup_{n=1}^{k_a} \mathcal{B}_n\left(t\right)\setminus \mathcal{B}_{n-1}\left(t\right)\right) \stackrel{(\ref{UnionKempty})}{=} V\left(\mathcal{A}\setminus\mathcal{B}_0\left(t\right)\right),\nonumber
\end{eqnarray}
because the distances between atoms of the cluster are at most $t$. We further make use of Lemma \ref{LemmaNeighK} which states that $R_{k_a}\preceq_{st} \tilde{r}_{k_a}$. As a result the area $\mathcal{A}^{LB}$ with radius $\tilde{r}_{k_a}$ is stochastically larger than the same area with radius $R_{k_a}$. The bound is further calculated using the p.d.f. in (\ref{pdfneighK})

\begin{eqnarray}
\label{LB2}
\int_0^{\infty}d\tilde{V}_1\cdots\int_{\tilde{V}_{k_a-2}}^{\infty}d\tilde{V}_{k_a-1}\int_{\tilde{V}_{k_a-1}}^{\infty}d\tilde{V}_{k_a} e^{-\tilde{V}_{k_a}}e^{-k_a\left(1-\alpha\right)\tilde{V}_{k_a}} - \tilde{\mathcal{Q}}\left(k_a\right).
\end{eqnarray}
In the above, $\tilde{\mathcal{Q}}\left(k_a\right)$ is a correction term, which is non-decreasing in $k_a$. To understand this, consider the case $k_a=2$ in Fig. \ref{fig:algoA2}. Suppose that the atom with $k_a=2$ is the typical atom $\mathbf{z}_2:=\mathbf{z}_o$, with distance $R_2$ from its first neighbour ($\mathbf{z}_1$) with $k_a=1$ (the closest of the two roots). The ball $\mathcal{B}\left(\mathbf{z}_o,R_2\right)$ is empty and the atom $\mathbf{z}_1$ lies on its boundary. However this is in mutually-nearest-neighbour relation with $\mathbf{z}_0$, which lies on a distance $R_1$ at the boundary of the ball $\mathcal{B}\left(\mathbf{z}_1,R_1\right)$. It should necessarily lie outside the ball $\mathcal{B}\left(\mathbf{z}_o,R_2\right)$, so the integral on the left-hand side of (\ref{LB2}) should not consider such points for $\mathbf{z}_0$. Having said this, the correction term can be calculated using the inequality

\begin{eqnarray}
\label{CorrAnc}
\int_{\mathbf{z}_0\in\mathcal{B}\left(\mathbf{z}_2,R_2\right)\cap \mathcal{B}\left(\mathbf{z}_0,R_1\right)}f\left(\mathbf{z}_0\right)d\mathbf{z}_0 & \leq &  \frac{V\left(\mathcal{B}\left(\mathbf{z}_2,R_2\right)\cap \mathcal{B}\left(\mathbf{z}_0,R_1\right)\right)}{V\left(\mathcal{B}\left(\mathbf{z}_2,R_2\right)\right)} \int_{\mathbf{z}_0\in\mathcal{B}\left(\mathbf{z}_2,R_2\right)}f\left(\mathbf{z}_0\right)d\mathbf{z}_0\nonumber\\
& \leq & \alpha \int_{\mathbf{z}_0\in\mathcal{B}\left(\mathbf{z}_2,R_2\right)}f\left(\mathbf{z}_0\right)d\mathbf{z}_0.
\end{eqnarray}
From the above, the correction term for $k_a=2$ is 

\begin{eqnarray}
\label{CorrTerm2}
\tilde{\mathcal{Q}}\left(2\right) = \tilde{q}\left(2\right) & \leq & \alpha\int_0^{\infty}d\tilde{V}_1\int_{\tilde{V}_1}^{\infty}d\tilde{V}_2 e^{-\tilde{V}_2}e^{-2(1-\alpha)\tilde{V}_2}\nonumber\\
& = & \frac{\alpha}{\left(1+2(1-\alpha)\right)^2} = 0.0795.
\end{eqnarray}
The correction for $k_a=3$ will use this term and add the extra $\tilde{q}\left(3\right)$, which comes for the case that the point $\mathbf{z}_0$ falls in the union of balls $\mathcal{B}\left(\mathbf{z}_3,R_3\right)\cup \mathcal{B}\left(\mathbf{z}_2,R_2\right)$, given that $\mathbf{z}_1$, does not fall in the ball $\mathcal{B}\left(\mathbf{z}_3,R_3\right)$ ($=\tilde{q}(2)$). This means that the correction term $\tilde{\mathcal{Q}}\left(k_a\right)\geq \tilde{\mathcal{Q}}\left(2\right)$, for $k_a\geq 2$. As a result the lower bound is $0$ for $k_a\geq 3$.

For the upper bound we again choose some radius $R_{k_a}=t$. Then the volume of the area $\mathcal{A}^{UB}\left(t\right)$
\begin{eqnarray} 
\label{VolumeLB}
V\left(\mathcal{A}^{UB}\left(t\right)\right) 	& = & \left(1-\alpha\right)V_0\left(t\right) \leq V\left(\bigcup_{n=1}^{k_a} \mathcal{B}_n\left(t\right)\setminus \mathcal{B}_{n-1}\left(t\right)\right) \stackrel{(\ref{UnionKempty})}{=} V\left(\mathcal{A}\setminus\mathcal{B}_{0}\left(t\right)\right),\nonumber
\end{eqnarray}
because the area at the left-hand side is exactly the area $\mathcal{B}_{k_a-1}\left(t\right)\setminus \mathcal{B}_{k_a}\left(t\right)$ (equal to the symmetric $\mathcal{B}_{k_a}\left(t\right)\setminus \mathcal{B}_{k_a-1}\left(t\right)$) and does not contain any part of the rest $k_a-1$ planar subsets $\mathcal{B}_n\left(t\right)\setminus \mathcal{B}_{n-1}\left(t\right)$. Including part of these further tightens the bound. Observe that all the balls have the same radius $t$ and distances between them equal to $R_n\leq t$, $1\leq n\leq k_a$. Obviously, when the ancestor number is $k_a$, then $t=R_{k_a}$. The area of $\mathcal{B}_n\left(t\right)\setminus \mathcal{B}_{n-1}\left(t\right)$ can be given as a result of the area of overlap of two circles with the same radius $t=R_{k_a}$ and distance between centers $R_n$. The area in question is equal to \cite{Wolfr2circles}

\begin{eqnarray}
\label{SetminusTwoBalls}
\mathcal{B}_n\left(R_{k_a}\right)\setminus \mathcal{B}_{n-1}\left(R_{k_a}\right) & = & \pi R_{k_a}^2\left( 1-\frac{2}{\pi}\arccos\left(\frac{R_n}{2R_{k_a}}\right)+\frac{R_n}{2\pi R_{k_a}}\sqrt{4-\frac{R_n^2}{R_{k_a}^2}}\right).
\end{eqnarray}
Observe that for $R_n=R_{k_a}$, the expression above reduces to $\pi R_{k_a}\left(1-\alpha\right)$. We use the first order Maclaurin series approximation of the $\arccos$ function to bound the area, which gives $\arccos\left(x\right)\leq \frac{\pi}{2} - x$. As a result, the total area 
\begin{eqnarray}
\label{SetminusAllBalls}
\sum_{n=1}^{k_a}\mathcal{B}_n\left(R_{k_a}\right)\setminus \mathcal{B}_{n-1}\left(R_{k_a}\right) & = & V\left(\mathcal{B}_{k_a}\left(R_{k_a}\right)\setminus \mathcal{B}_{k_a-1}\left(R_{k_a}\right)\right) + \sum_{n=1}^{k_a-1}V\left(\mathcal{B}_n\left(R_{k_a}\right)\setminus \mathcal{B}_{n-1}\left(R_{k_a}\right)\right)\nonumber\\
& \geq & \pi R_{k_a}^2\left(\left( 1-\alpha\right) + \sum_{n=1}^{k_a-1}\left( 1-\frac{2}{\pi}\left(\frac{\pi}{2}-\frac{R_n}{2 R_{k_a}}\right)+\frac{R_n}{2\pi R_{k_a}}\sqrt{4-\frac{R_n^2}{R_{k_a}^2}}\right)\right)\nonumber\\
& \geq & \pi R_{k_a}^2\left(\left( 1-\alpha\right) + \sum_{n=1}^{k_a-1}\frac{R_n}{\pi R_{k_a}}\right)\nonumber\\
& \stackrel{(a)}{=} & V_{k_a}\left(\left( 1-\alpha\right) + \sum_{n=1}^{k_a-1}\frac{\sqrt{V_n}}{\pi \sqrt{V_{k_a}}}\right)\nonumber\\
& \stackrel{(b)}{\geq} & V_{k_a}\left( 1-\alpha\right) + \sum_{n=1}^{k_a-1}\frac{V_{n}}{\pi} = V\left(\mathcal{A}^{UB_2}\left(R_{k_a}\right)\right) .\nonumber
\end{eqnarray}
The equality (a) comes by change of variables and the inequality (b) results from the fact that $ V_{k_a}\geq V_n$. After bounding the area, we use Lemma \ref{LemmaAuxK} which states that, $R\acute{}_{k_a}\preceq_{st} R_{k_a}$. As a result, the area $\mathcal{A}^{UB_2}$ with radius $R\acute{}_{k_a}$ is stochastically smaller than the same area with radius $R_{k_a}$. The upper bound is further calculated using the p.d.f. in (\ref{pdfANC})

\begin{eqnarray}
\label{LB}
k_a!\ \int_0^{\infty}dV\acute{}_1e^{V\acute{}_1}\cdots\int_{V\acute{}_{k_a-2}}^{\infty}dV\acute{}_{k_a-1}e^{V\acute{}_{k_a-1}}\int_{V\acute{}_{k_a-1}}^{\infty}dV\acute{}_{k_a} e^{-k_aV\acute{}_{k_a}}e^{-\left(1-\alpha\right)V\acute{}_{k_a}} e^{-\sum_{n=1}^{k_a-1}\frac{V\acute{}_{n}}{\pi}} =\nonumber\\
k_a!\ \int_0^{\infty}dV\acute{}_1e^{\left(1-\frac{1}{\pi}\right)V\acute{}_1}\cdots\int_{V\acute{}_{k_a-2}}^{\infty}dV\acute{}_{k_a-1}e^{\left(1-\frac{1}{\pi}\right)V\acute{}_{k_a-1}}\int_{V\acute{}_{k_a-1}}^{\infty}dV\acute{}_{k_a} e^{-k_aV\acute{}_{k_a}}e^{-\left(1-\alpha\right)V\acute{}_{k_a}}.\nonumber
\end{eqnarray} 
\end{proof}

As in the case of the generation number, the bounds for the ancestor number probability are density-invariant as well (see \textbf{P.I}). We next give a numerical evaluation of the bounds in Table \ref{TabakaULB} and plot the results for comparison in Fig. \ref{FigAkaplotA}. 

\begin{table}[h]
\caption{Numerical values for the bounds of the ancestor number probability $g_a(k_a)$, for $k_a=1,\ldots,6$.}
\centering
\begin{tabular}{|l|l|l|l|l|l|l|l|l|l|l|}
\hline
			& $g_a(1)$ & $g_a(2)$	& $g_a(3)$	& $g_a(4)$	& $g_a(5)$ & $g_a(6)$\\\hline                
Upper Bound & $0.6215$ & $0.3977$ & $0.2529$ & $0.1593$ & $0.0996$ & $0.0618$\\      
Lower Bound & $0.6215$ & $0.1238$ & $0$ & $0$ & $0$ & $0$\\              
\hline
\end{tabular}
\label{TabakaULB}
\end{table}

\begin{figure}[h]
\centering
 \includegraphics[trim = 15mm 60mm 00mm 55mm, clip, width=0.6\textwidth]{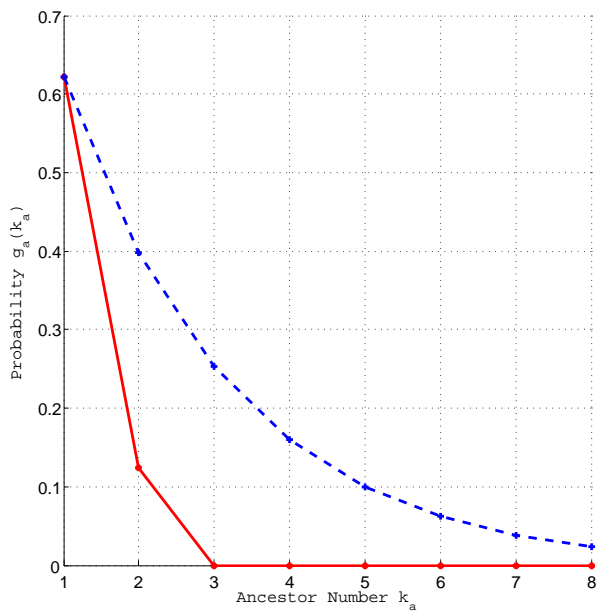}
	\caption{Evaluation of the upper and lower bounds of the ancestor number probability $g_a(k_a)$, for $k_a\geq 1$.}
\label{FigAkaplotA}
\end{figure}

\subsection{Interpretation of the ancestor number: Cluster size distribution}

The ancestor number is, as explained at the beginning of the previous paragraph, a more precise number to describe the cluster size, compared to the generation number.
A first important observation is that the set of events $\mathcal{E}_{k_a}=$ \{the typical atom has ancestor number $k_a$\} exhausts the event space and hence,
\begin{eqnarray}
\label{AncNumPDF}
\sum_{k_a=1}^{\infty}g_a\left(k_a\right) & = & 1.
\end{eqnarray}
To get a better understanding of the ancestor number, consider a finite set of atoms, with size $N$. The number of atoms will tend to infinity $N\rightarrow \infty$ as we approach the actual realization of the p.p.p. $\Phi$. Let the algorithm described in the previous paragraph evolve stepwise, and first find all atoms with ancestor number $1$ (in step 1). These are $N_1$ out of the total $N$ atoms and constitute the "roots" of all clusters with higher size. In other words, the total number of existing clusters in the finite setting is $N_1/2$ (because both mutually-nearest-neighbours take ancestor number $1$). A subset of these will add (in Step 2) another atom with ancestor number $2$. The number of clusters with cardinality at least $3$ will be $N_2\leq N_1/2$. The process will continue in the next steps of the algorithm, resulting in the inequalities
\begin{eqnarray}
\label{NumOFclust}
\frac{N_1}{2}\geq N_2\geq N_3 \geq \ldots
\end{eqnarray}
Dividing the above expression by $N$ and letting $N\rightarrow\infty$ we have proven that:
\begin{Lem}
\label{LemIneqAN}
The probability that the ancestor number of the typical atom is $k_a$, is ordered as follows
\begin{eqnarray}
\label{NumOFclust}
\frac{g_a\left(1\right)}{2}\geq g_a\left(2\right)\geq g_a\left(3\right) \geq \ldots
\end{eqnarray}
\end{Lem}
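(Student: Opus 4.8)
The plan is to reduce the stochastic ordering among the $g_a(k_a)$ to a purely combinatorial fact about how ancestor numbers are distributed inside a single cluster, and then to transfer that fact to probabilities by an ergodic averaging argument over a growing window.

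First I would record the key structural observation implied by Definition \ref{DefAncest}. The ancestor number is assigned by a greedy procedure which, starting from the mutually-nearest-neighbour pair (unique in each cluster by Lemma \ref{Lem:NNpair}), adjoins exactly one new atom to the set $\mathcal{S}_n$ at every step. Hence, within a cluster $\mathcal{C}$ of cardinality $c = card(\mathcal{C})$, precisely two atoms carry $k_a = 1$ and at most one atom carries each value $k_a = j$ with $j \geq 2$; in fact the multiset of ancestor numbers is exactly $\{1,1,2,3,\ldots,c-1\}$. Consequently a cluster contributes an atom of ancestor number $k \geq 2$ if and only if $c \geq k+1$.

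Next I would count in a large observation window. Fixing a bounded region containing $N$ atoms, let $N_k$ be the number of atoms of ancestor number $k$ inside it and let $M$ be the number of clusters contained in the window. The structural fact gives $N_1 = 2M$, so $N_1/2 = M$, while for $k \geq 2$ the quantity $N_k$ equals the number of clusters whose cardinality is at least $k+1$. Since the family of clusters of cardinality $\geq k+2$ is contained in the family of cardinality $\geq k+1$, and since every cluster has cardinality at least $2$ by property \textbf{P.V}, this produces the deterministic chain
\begin{equation}
\frac{N_1}{2} = M \geq N_2 \geq N_3 \geq \cdots .
\end{equation}

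Finally I would pass to the limit. Dividing through by $N$ and letting the window grow so that $N \to \infty$, the normalized counts $N_k/N$ converge to the probability $g_a(k)$ that the typical atom has ancestor number $k$, by the ergodic theorem for the stationary, ergodic Poisson point process, which identifies the spatial frequency of a translation-invariant mark with its Palm probability; the inequalities survive the limit and give the claim. I expect the only genuine obstacle to be exactly this passage: one must check that boundary effects are negligible, i.e. that clusters straddling the window boundary, which are only partially observed and could be miscounted, carry a vanishing fraction of the atoms as $N \to \infty$. This holds because clusters are a.s. finite by property \textbf{P.IV}, so the share of atoms lying in boundary-meeting clusters tends to zero with the surface-to-volume ratio of the window. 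The combinatorial core is elementary; the delicacy lies solely in making this finite-to-infinite passage rigorous.
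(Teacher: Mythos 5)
Your proof follows essentially the same route as the paper: count atoms by ancestor number in a finite window, observe that $N_1/2$ equals the number of clusters while $N_k$ for $k\geq 2$ counts the clusters of cardinality at least $k+1$, deduce the deterministic chain $N_1/2\geq N_2\geq N_3\geq\cdots$ from the nesting of these families, and divide by $N$ before letting $N\to\infty$. The only difference is that you make explicit the ergodic/Palm identification and the negligibility of boundary clusters, steps the paper passes over silently; this is a welcome tightening rather than a new argument.
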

This Lemma further tightens the upper bound of the ancestor number probability, as shown in Fig.\ref{06215div2}.

\begin{figure}[h]
\centering
 \includegraphics[trim = 15mm 65mm 00mm 55mm, clip, width=0.6\textwidth]{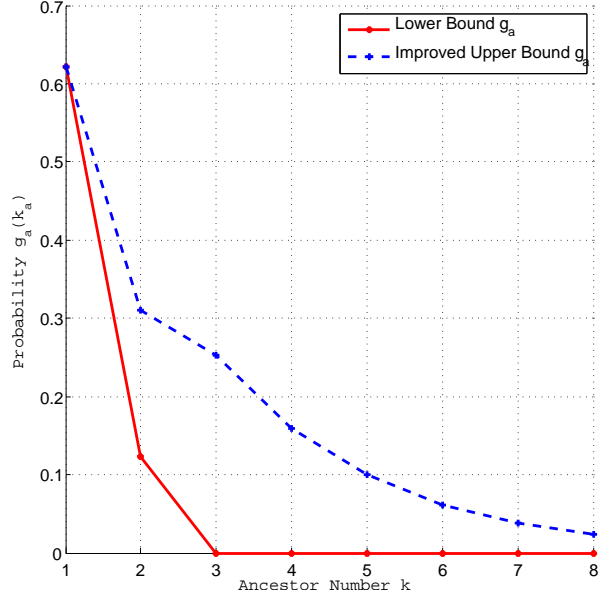}
	\caption{Tightened upper bound of the ancestor number probability $g_a(k_a)$, for $k_a\geq 1$.}
\label{06215div2}
\end{figure}
Using the above line of thought, the probability that the cardinality of the typical cluster is larger or equal to $k_a+1$ is exactly defined by the probability of the ancestor number $k_a$.

\begin{Thm}
\label{ThmClusterSize}
The probability that the typical atom belongs to a cluster of size $k_a+1$, is fully defined by the probability of the typical atom having ancestor number $k_a$, i.e.
\begin{eqnarray}
\label{CardClustProb}
\mathbb{P}\left[\mathbf{z}_o\in\mathcal{C}_o,\ card\left(\mathcal{C}_o\right) = k_a+1\right] & = & \left(k_a+1\right)\left(g_a(k_a) - g_a\left(k_a+1\right)\right),\ k_a\geq 2\nonumber\\
\label{CardClustProb2}
\mathbb{P}\left[\mathbf{z}_o\in\mathcal{C}_o,\ card\left(\mathcal{C}_o\right) = 2\right] & = & 2\left(g_a(1)/2 - g_a\left(2\right)\right),\ k_a= 1.
\end{eqnarray}
The tail probability of the above event is equal to
\begin{eqnarray}
\label{CardClustProbTail}
\mathbb{P}\left[\mathbf{z}_o\in\mathcal{C}_o,\ card\left(\mathcal{C}_o\right)\geq k_a+1\right] & = & k_a g_a\left(k_a\right) + \sum_{n=k_a}^{\infty} g_a\left(n\right),\ k_a\geq 2
\end{eqnarray}
From (\ref{CardClustProb2}) and (\ref{AncNumPDF}), $\mathbb{P}\left[card\left(\mathcal{C}_o\right)\geq 2\right] = \sum_{n=1}^{\infty} g_a\left(n\right) = 1$.
\end{Thm}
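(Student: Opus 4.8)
The plan is to reduce everything to a counting argument on a large finite window and then pass to the limit $N\to\infty$, exactly in the spirit of Lemma \ref{LemIneqAN}. First I would record the structural fact implicit in Definition \ref{DefAncest}: a cluster of cardinality $K+1$ carries the ancestor labels $1,1,2,3,\ldots,K$, since precisely two atoms (the root pair) receive $k_a=1$, while each value $k_a=2,\ldots,K$ is attained by exactly one atom (giving $2+(K-1)=K+1$ atoms in total). Denoting by $M_{K+1}$ the number of size-$(K+1)$ clusters in a window containing $N$ atoms, this yields the atom counts $N_1=2\sum_{K\geq 1}M_{K+1}$ and, for $j\geq 2$, $N_j=\sum_{K\geq j}M_{K+1}$ (a cluster contributes an atom of label $j$ iff its cardinality is at least $j+1$). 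Dividing by $N$ and letting $N\to\infty$ gives $g_a(1)=2\sum_{K\geq 1}m_{K+1}$ and $g_a(j)=\sum_{K\geq j}m_{K+1}$ for $j\geq 2$, where $m_{K+1}:=\lim_N M_{K+1}/N$ is the asymptotic density of size-$(K+1)$ clusters per atom.

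Next I would isolate the size-biasing that produces the prefactor $k_a+1$. Because the typical atom is uniform among the $N$ atoms and $(K+1)M_{K+1}$ of them lie in size-$(K+1)$ clusters, one gets $\mathbb{P}\left[\mathbf{z}_o\in\mathcal{C}_o,\ card\left(\mathcal{C}_o\right)=K+1\right]=(K+1)m_{K+1}$ in the limit. The point-mass formulas then follow by telescoping: for $k_a\geq 2$ both $g_a(k_a)$ and $g_a(k_a+1)$ use the $j\geq 2$ expression, so $g_a(k_a)-g_a(k_a+1)=m_{k_a+1}$, and multiplying by $k_a+1$ reproduces (\ref{CardClustProb}). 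The case $k_a=1$ must be treated separately only because $g_a(1)$ carries the extra factor $2$ coming from the doubled root; writing $g_a(1)/2-g_a(2)=\sum_{K\geq 1}m_{K+1}-\sum_{K\geq 2}m_{K+1}=m_2$ and multiplying by $2$ gives (\ref{CardClustProb2}).

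For the tail identity I would sum the point masses, $\mathbb{P}\left[card\left(\mathcal{C}_o\right)\geq k_a+1\right]=\sum_{K\geq k_a}(K+1)m_{K+1}$, and match it to the right-hand side of (\ref{CardClustProbTail}) through a single interchange of summation. Writing $\sum_{n=k_a}^{\infty}g_a(n)=\sum_{n=k_a}^{\infty}\sum_{K\geq n}m_{K+1}$ and summing over $n$ first produces $\sum_{K\geq k_a}(K-k_a+1)m_{K+1}$; adding $k_a\,g_a(k_a)=k_a\sum_{K\geq k_a}m_{K+1}$ collapses the weight to $K+1$, which is exactly the claimed tail. The normalisation $\mathbb{P}\left[card\left(\mathcal{C}_o\right)\geq 2\right]=1$ is then immediate from (\ref{AncNumPDF}) together with property \textbf{P.V} (no singletons), since every atom lies in a cluster of size at least $2$.

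The main obstacle is not the algebra, which is merely a telescoping difference plus a Tonelli-type reindexing of a nonnegative double sum, but the justification of the limits $M_{K+1}/N\to m_{K+1}$ and $N_{k_a}/N\to g_a(k_a)$. Rigorously these empirical frequencies should be read as Palm expectations under the typical atom of the stationary p.p.p., and their existence rests on the ergodicity of the Poisson process and on the a.s. finiteness of every cluster (property \textbf{P.IV}), which guarantees that each window decomposes into finitely labelled clusters. I would invoke these facts exactly as Lemma \ref{LemIneqAN} already does, so that the combinatorial identities above are valid for the limiting (Palm) probabilities rather than only for the finite counts.
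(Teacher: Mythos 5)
Your proof is correct and follows essentially the same route as the paper: a counting argument over a finite window using the atom counts per ancestor number, with the observation that a size-$(K+1)$ cluster carries labels $1,1,2,\ldots,K$, followed by the limit $N\to\infty$. Your version is slightly more explicit (introducing the cluster-count densities $m_{K+1}$ and verifying the tail identity by reindexing the double sum, which the paper only asserts), but the underlying decomposition is identical.
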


\begin{proof}
To prove this, we use the finite model above with $N$ total atoms and $N_{k_a}$ atoms with ancestor number $k_a$ (see also (\ref{NumOFclust})). Suppose the maximum cluster size is $K> 2$. Then there are $N_{K-1}$ atoms with ancestor number $k_a=K-1$. Similarly, there are $N_{K-2},\ldots,N_1$ atoms with ancestor number $k_a=K-2,\ldots,1$. Since, the size of the maximum cluster is $K$, the total atoms which belong to these clusters is $KN_K$. Then,
\begin{eqnarray}
\label{proofEqnArr}
\frac{KN_{K-1}}{N}\rightarrow \mathbb{P}\left[\mathbf{z}_o\in\mathcal{C}_o,\ card\left(\mathcal{C}_o\right)= K\right].\nonumber \\
\frac{(K-1)\left(N_{K-2}-N_{K-1}\right)}{N}\rightarrow \mathbb{P}\left[\mathbf{z}_o\in\mathcal{C}_o,\ card\left(\mathcal{C}_o\right)= K-1\right].\nonumber \\
\ldots\nonumber\\
\frac{3\left(N_{2} - N_3\right)}{N}\rightarrow \mathbb{P}\left[\mathbf{z}_o\in\mathcal{C}_o,\ card\left(\mathcal{C}_o\right)= 2\right].\nonumber
\end{eqnarray}
For the case $k_a=1$, we have a small difference, because both atoms of the root take ancestor number equal to $1$. Hence
\begin{eqnarray}
\label{proofEqnArr2}
\frac{2\left(N_{1}/2 - N_2\right)}{N}\rightarrow \mathbb{P}\left[\mathbf{z}_o\in\mathcal{C}_o,\ card\left(\mathcal{C}_o\right)= 2\right].\nonumber
\end{eqnarray}
By letting $N\rightarrow\infty$ the left-hand side gives the probability that the typical atom belongs to a cluster of size $k_a+1$,
\begin{eqnarray}
\label{ExactProbANC}
\mathbb{P}\left[\mathbf{z}_o\in\mathcal{C}_o,\ card\left(\mathcal{C}_o\right) = k_a+1\right] & = & \left(k_a+1\right)\left(g_a(k_a) - g_a\left(k_a+1\right)\right),\ k_a\geq 2\nonumber\\
\mathbb{P}\left[\mathbf{z}_o\in\mathcal{C}_o,\ card\left(\mathcal{C}_o\right) = 2\right] & = & 2\left(g_a(1)/2 - g_a\left(2\right)\right),\ k_a= 1.
\end{eqnarray}
\end{proof}
The tail distribution of the cluster size, is bounded using the bounds of the ancestor number in Theorem \ref{AncestorBoundsUD} and Lemma \ref{LemIneqAN}. It is plotted in Fig.\ref{BoundTailClust}. Furthermore, using the known bounds for the ancestor number (upper bound $g_a\leq g_a^{UB}$ for the positive term of the difference and lower bound $-g_a\leq -g_a^{LB}$ for the negative term, we can find a loose upper bound for the exact cluster size. The most important result from the plots is to illustrate how the probability of a higher cluster size (Fig.\ref{BoundClustSize}) diminishes as $k_a+1$ increases. Observe that the loose upper bound is already $10\%$ for cluster size $11$, implying that cluster sizes of average or large size are highly improbable. This is better supported by the tail probability upper bound, which is also around $10\%$ for cluster size $11$.

\begin{figure}[h]
\centering
 \includegraphics[trim = 15mm 65mm 00mm 55mm, clip, width=0.6\textwidth]{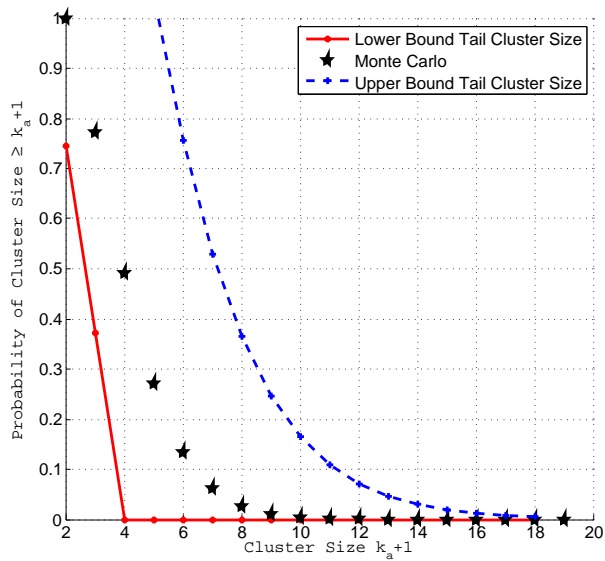}
	\caption{Distribution of the tail probability for the cluster size $\mathbb{P}\left[card\left(\mathcal{C}_o\right)\geq k_a+1\right]$.}
\label{BoundTailClust}
\end{figure}

\begin{figure}[h]
\centering
 \includegraphics[trim = 15mm 65mm 00mm 55mm, clip, width=0.6\textwidth]{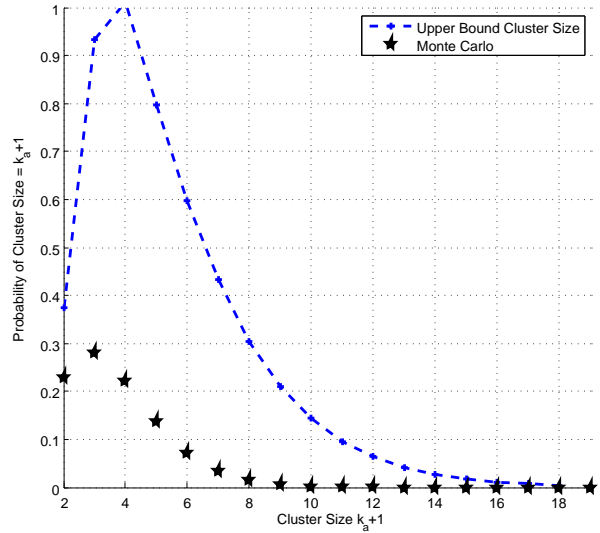}
	\caption{Upper bound for the probability of the cluster size $\mathbb{P}\left[card\left(\mathcal{C}_o\right)= k_a+1\right]$, $k_a\geq 2$.}
\label{BoundClustSize}
\end{figure}

The two figures also include the Monte Carlo values of the cluster size probability. These are derived by simulation over an area with dimensions $10\times 10$ [$m^2$] and a p.p.p. with density $2$ [atoms/$m^2$] hence in average $\mathbb{E}\left[N_t\right]=200$ atoms are randomly placed per realization. The values of the cluster size probability result after $5000$ iterations.

\begin{table}[h]
\caption{Numerical values for the bounds and Monte Carlo simulation of the cluster size and tail probability, for $k=2,\ldots,8$.}
\centering
\begin{tabular}{|l|l|l|l|l|l|l|l|}
\hline
$card(\mathcal{C}_o)=$			& $2$ & $3$	& $4$	& $5$	& $6$ & $7$ & $8$\\\hline
Upper Bound & $0.3739$ &  $0.9323$ & $1.0116$ & $0.7967$ & $0.5974$ & $0.4325$ & $0.3050$\\\hline
Monte Carlo & $0.2287$ &  $0.2796$ & $0.2219$ & $0.1362$ & $0.0727$ & $0.0347$ & $0.0163$\\
\hline
\hline
$card(\mathcal{C}_o)\geq$			& $2$ & $3$	& $4$	& $5$	& $6$ & $7$ & $8$\\\hline
Upper Bound & $1.6037$ &  $1.6037$ & $1.4301$ & $1.0558$ & $0.7570$ & $0.5303$ & $0.3647$ \\\hline
Monte Carlo & $0.9999$ &  $0.7712$ & $0.4916$ & $0.2697$ & $0.1335$ & $0.0608$ & $0.0261$ \\\hline
Lower Bound & $0.7453$ &  $0.3714$ & $0$ & $0$ & $0$ & $0$ & $0$\\
\hline
\end{tabular}
\label{TabgkULB}
\end{table}


\newpage

\bibliographystyle{alpha}
\footnotesize
\bibliography{MatchingCOOP}

\end{document}